\documentclass[journal,onecolumn]{IEEEtran}

\usepackage{amsmath,amssymb,amsthm,mathtools}
\usepackage{bm}
\usepackage{graphicx}
\usepackage{booktabs}
\usepackage{enumitem}
\usepackage{hyperref}
\usepackage{xcolor}
\usepackage{listings}
\usepackage{cite}

\theoremstyle{plain}
\newtheorem{theorem}{Theorem}
\newtheorem{lemma}{Lemma}

\newtheorem{proposition}{Proposition}
\theoremstyle{remark}
\newtheorem{remark}{Remark}

\newcommand{\R}{\mathbb{R}}
\newcommand{\E}{\mathbb{E}}

\newcommand{\diag}{\operatorname{diag}}

\newcommand{\logtwo}{\log_2}

\title{On the Entropy of General Mixture Distributions}

\author{Namyoon Lee,~\IEEEmembership{Senior Member,~IEEE}%
\thanks{N. Lee is with the Department of Electrical Engineering, POSTECH, Pohang, South Korea
(e-mail: \texttt{nylee@postech.ac.kr}).}%
}

\begin{document}
\maketitle

\begin{abstract}
Mixture distributions are a workhorse model for multimodal data in information theory, signal processing, and machine learning. Yet even when each component density is simple, the differential entropy of the mixture is notoriously hard to compute because the mixture couples a logarithm with a sum. This paper develops a deterministic, closed-form toolkit for bounding and accurately approximating mixture entropy directly from component parameters. Our starting point is an information-theoretic channel viewpoint: the latent mixture label plays the role of an input and the observation is the output. This viewpoint separates mixture entropy into an average within-component uncertainty plus an overlap term that quantifies how much the observation reveals about the hidden label. We then bound and approximate this overlap term using pairwise overlap integrals between component densities, yielding explicit expressions whenever these overlaps admit closed form. A simple, family-dependent offset corrects the systematic bias of the Jensen overlap bound and is calibrated to be exact in the two limiting regimes of complete overlap and near-perfect separation. A final clipping step guarantees that the estimate always respects universal information-theoretic bounds. Closed-form specializations are provided for Gaussian, factorized Laplacian, uniform, and hybrid mixtures, and numerical experiments validate the resulting bounds and approximations across separation, dimension, number of components, and correlated covariances.
\end{abstract}


\begin{IEEEkeywords}
Differential entropy, mixture distributions, Gaussian mixture model, Laplacian mixture, uniform mixture,
entropy bounds, Jensen's inequality, mutual information, collision entropy, Monte Carlo estimation.
\end{IEEEkeywords}

\section{Introduction}
\label{sec:intro}

Mixture distributions provide a parsimonious model for the ubiquitous situation in which the observable random vector
is continuous, while its generative mechanism includes a discrete \emph{mode}, \emph{cluster}, or \emph{state}.
A $K$-component mixture has density
\begin{equation}
\label{eq:mix_pdf}
p_X(x)=\sum_{c=1}^K \pi_c f_c(x),\qquad
\pi_c>0,\ \sum_{c=1}^K \pi_c=1,
\end{equation}
where $C\in\{1,\dots,K\}$ is a latent label with $\mathbb{P}(C=c)=\pi_c$ and $X\mid(C=c)\sim f_c$.
Such models are foundational in statistics and machine learning as universal density approximators and
as the basis of model-based clustering and discriminant analysis
\cite{BanfieldRaftery1993ModelBased,McLachlanPeel2000Mixtures,FraleyRaftery2002ModelBasedClustering},
and are typically fit using maximum-likelihood methods such as the expectation-maximization (EM) algorithm \cite{DempsterLairdRubin1977EM}.
Beyond Gaussian mixtures, heavy-tailed components are used to capture robustness and sparsity:
mixtures of shifted asymmetric Laplace distributions provide flexible tail and skewness modeling in high-dimensional
data analysis \cite{FranczakBrowneMcNicholas2014SALMixtures}, and early empirical studies already used Laplace--normal
(Gaussian--Laplacian) mixtures for wind-shear data \cite{JonesMcLachlan1990LaplaceNormal}.
On the other hand, bounded-support components arise whenever uncertainty is constrained by quantization,
saturation, or feasibility constraints; mixtures of uniform distributions have been studied both for parameter
estimation in finite uniform mixtures \cite{CraigmileTitterington1997UniformMixtures} and for special cases such as
two-component uniform mixtures \cite{HusseinLiu2009TwoUniforms}.
Hybrid mixtures combining nominal and outlier models are also common:
Gaussian mixtures augmented with a uniform  noise component are a classical robust clustering device
\cite{CorettoHennig2011GaussianUniform} and have been further analyzed under separation constraints
\cite{Coretto2022GaussianUniformNoiseSeparation}, while Gaussian--Laplacian hybrids provide a simple dense-plus-impulsive
modeling primitive that also appears in applied settings \cite{ShenoyGorinevsky2014GaussianLaplacian}.

A central information-theoretic quantity associated with \eqref{eq:mix_pdf} is the differential entropy
\begin{equation}
\label{eq:intro_h_def}
h(X)=-\int_{\R^n} p_X(x)\logtwo p_X(x)\,dx,
\end{equation}
which governs lossless compression limits and appears throughout information theory \cite{CoverThomas2006}.
Despite the tractability of many component families, mixture entropy rarely admits a closed form: the mixture introduces
a log-of-sum term $\log(\sum_c \pi_c f_c(x))$ inside the expectation, which is the analytic bottleneck.


The key viewpoint of this paper is to treat \eqref{eq:mix_pdf} as a memoryless channel with input $C$ and output $X$.
This yields the exact decomposition
\begin{equation}
\label{eq:intro_decomp}
h(X)=h(X\mid C)+I(X;C),
\end{equation}
where $h(X\mid C)=\sum_{c=1}^K \pi_c h(f_c)$ is the mixture-weighted component entropy and $I(X;C)$ is the mutual
information between continuous $X$ and discrete $C$.
Thus mixture entropy evaluation reduces to estimating the information term $I(X;C)$, which captures the overlap geometry
of the component densities.
Since $0\le I(X;C)\le H(C)$ for discrete $C$, \eqref{eq:intro_decomp} implies the universal label-sandwich bounds
\begin{equation}
\label{eq:intro_sandwich}
h(X\mid C)\ \le\ h(X)\ \le\ h(X\mid C)+H(C).
\end{equation}
These bounds are sharp at two extremes: complete overlap ($I(X;C)=0$) and vanishing overlap ($I(X;C)\approx H(C)$).
Most practical regimes lie in between, motivating explicit bounds and accurate approximations that remain computable
from component parameters.

\subsection{Related work}
\label{subsec:related_work}

Finite mixtures are a standard tool for density estimation, clustering, and latent-variable modeling;
see, e.g., the monograph \cite{McLachlanPeel2000Mixtures} and the foundational EM framework
\cite{DempsterLairdRubin1977EM}.
Model-based clustering with Gaussian and non-Gaussian components is studied classically in
\cite{BanfieldRaftery1993ModelBased,FraleyRaftery2002ModelBasedClustering}.
Beyond purely Gaussian components, Laplace/Laplace-normal mixtures arise in applied modeling and robust
statistics \cite{JonesMcLachlan1990LaplaceNormal,FranczakBrowneMcNicholas2014SALMixtures}, while uniform
mixtures appear in bounded-support uncertainty modeling and parameter estimation
\cite{CraigmileTitterington1997UniformMixtures,HusseinLiu2009TwoUniforms}.
Hybrid mixtures that explicitly combine Gaussian structure with bounded noise or outlier components
are developed in \cite{CorettoHennig2011GaussianUniform,Coretto2022GaussianUniformNoiseSeparation}, and
Gaussian--Laplacian hybrids appear in application-driven robust modeling
\cite{ShenoyGorinevsky2014GaussianLaplacian}.

Deterministic approximations and bounds for Gaussian mixture entropy have a long history.
Huber \emph{et al.}~\cite{HuberBaileyDurrantWhyteHanebeck2008} develop practical approximations and
computable bounds (including Jensen-type bounds that move the logarithm outside the intractable integral).
Moshksar and Khandani~\cite{MoshksarKhandani2016EntropyGM} derive arbitrarily tight upper and lower bounds on
the differential entropy of Gaussian mixtures.
More recently, Dahlke and Pacheco~\cite{DahlkePacheco2023} analyze polynomial (Taylor/Legendre) entropy
approximations, identify regimes where widely used expansions can diverge, and propose convergent series
constructions for Gaussian mixture model (GMM) entropy.

Closely related to mixture entropy evaluation are tractable surrogates for divergence and cross-entropy
quantities between mixtures, which arise in model comparison and variational objectives.
Hershey and Olsen~\cite{HersheyOlsen2007} propose efficient approximations for the Kullback-Leibler (KL) divergence between
Gaussian mixture models; related approximation ideas also appear in application contexts such as image
similarity for Gaussian mixtures \cite{GoldbergerGordonGreenspan2003KL2GMM}.

A complementary viewpoint is to control the fundamental analytic obstacle directly: the log-sum-exp (LSE)
structure of mixture densities.
Nielsen and Sun~\cite{NielsenSun2016} develop guaranteed (certifiable) bounds for information-theoretic
functionals of univariate mixtures via piecewise LSE inequalities, which can be tightened systematically
by refining the partition.

A common heuristic is the separated-components rule $h(X)\approx h(X\mid C)+H(C)$, which becomes accurate when overlaps are negligible. Furuya \emph{et al.}~\cite{FuruyaKusumotoTaniguchiKannoSuetake2024} provide theoretical error analysis showing how approximation quality improves with component separation.

From an information-theoretic viewpoint, the obstruction to evaluating the mixture entropy
$h\!\big(\sum_c \pi_c f_c\big)$ in closed form is exactly the \emph{concavity deficit} of differential entropy,
which can be interpreted as a Jensen--Shannon-type divergence among the mixture components
\cite{Lin1991JSD}. This perspective places mixture-entropy bounds within the broader landscape of
information measures and $f$-divergences \cite{Csiszar1967InformationMeasures,SasonVerdu2016FDivergence}.
In a comprehensive and general treatment, Melbourne \emph{et al.}~\cite{MelbourneTalukdarBhabanMadimanSalapaka2022}
derive sharp bounds on this concavity deficit, including mixture-dependent refinements of the classical
upper bound that leverage total-variation proximity between each component and its \emph{mixture complement};
their development connects to the role of total variation in controlling the distribution of relative
information \cite{Verdu2014TV} and draws inspiration from skew-divergence constructions
\cite{Audenaert2014Skew}. Complementary techniques based on symmetric decreasing rearrangements, which yield
entropy inequalities beyond the classical entropy power inequality and illuminate mixture-like structures,
are developed in \cite{WangMadiman2014EPIBeyond}.

Entropy and mutual information objectives are pervasive in experimental design, active learning, and Bayesian optimization:
classical foundations include Lindley~\cite{Lindley1956InformationExperiment} and Bernardo~\cite{Bernardo1979ExpectedInformation},
and maximum-entropy sampling is developed in \cite{ShewryWynn1987MaxEntropySampling}.
Modern Bayesian optimization algorithms explicitly use predictive entropy criteria
\cite{HernandezLobatoHoffmanGhahramani2014PES,RuMcLeodGranziolOsborne2018FIBO}, and variational Bayesian optimal
design is studied in \cite{FosterEtAl2019VBOED}; a broad survey of active learning objectives is given in
\cite{Settles2009ActiveLearningSurvey}.
Information-theoretic feature selection frameworks \cite{BrownPocockZhaoLujan2012CLMFeatureSelection} and
the information bottleneck principle \cite{TishbyPereiraBialek2000InformationBottleneck} further highlight
the demand for tractable entropy and mutual information proxies.

The above literature suggests a recurring trade-off between (i) certified tightness (often achieved via
general divergence metrics) and (ii) closed-form computability for concrete multivariate parametric families.
Our approach emphasizes \emph{pairwise overlap integrals} that admit closed-form expressions for multiple
time-honored mixture families (Gaussian, factorized Laplacian, uniform-on-sets, and hybrid mixtures), and
we propose a \emph{family-calibrated} offset approximation that is exact in the complete-overlap limit
and consistent with universal label-sandwich bounds for \emph{all} mixtures.

\subsection{Contributions}
\label{subsec:contributions}

This paper develops a simple and practical way to reason about—and compute—the differential entropy of mixture
distributions.

\begin{itemize}[leftmargin=*, itemsep=3pt]

\item \textbf{Universal bounds from a channel viewpoint:}
We treat a mixture model as a channel: the discrete component label is the input and the continuous observation is
the output. This immediately separates the mixture entropy into an easy part (the average within-component
uncertainty) and a hard part (how much the observation reveals the label). This viewpoint yields a universal
two-sided bracket on the mixture entropy that holds for any mixture family, in any dimension, without parametric
assumptions.

\item \textbf{A tight and deterministic approximation:}
We replace the hard \textit{label information} term by a closed-form proxy built from pairwise component overlaps. A Jensen step makes this proxy computable, an analytically chosen family-dependent offset fixes the normalization in the complete-overlap regime, and a final clipping step guarantees the estimate never violates the universal bracket. The result is a single approximation that tracks the mixture entropy across overlap regimes while remaining information-theoretically admissible.

\item \textbf{Closed-form formulas for time-honored mixture families:}
We provide explicit plug-in expressions—component entropies, overlap integrals, and offsets—for several
time-honored and widely used mixture classes: Gaussian mixtures, factorized Laplacian mixtures, uniform mixtures on
sets, and representative hybrid mixtures (Gaussian--Laplacian, Gaussian--uniform, Laplacian--uniform) under
standard structural assumptions that preserve closed-form computability.

\item \textbf{Numerical validation across regimes:}
Monte Carlo experiments validate two points: the universal bounds correctly bracket the true mixture entropy, and
the proposed approximation is typically tight across separations, dimensions, and numbers of components. The
experiments also highlight how geometry (compact support versus heavy tails) and dimension control the sharpness of
the transition from \textit{components indistinguishable} to \textit{labels essentially decodable}.

\end{itemize}

\subsection{Organization}
\label{subsec:intro_org}
Section~\ref{sec:bounds} develops universal bounds and tightness conditions.
Section~\ref{sec:approx} introduces the Jensen/overlap bound, offset calibration, and clipped approximation.
Section~\ref{sec:examples} provides closed-form specializations.
Section~\ref{sec:numerics} presents numerical experiments, and Section~\ref{sec:conclusion} concludes.

\section{Universal Bounds via Decomposition}
\label{sec:bounds}
In this section, we introduce a universial bounds of $h(X)$ for general mixture distributions $p_X(x)$. The key idea is to use decomposition of $h(X)$ mixture entropy can be decomposed into a closed-form part plus an  mutual information information between the latent label $C$ and the mixture variable $X$.

\subsection{Entropy Decomposition}
The following lemma formally states the entropy decomposition. 

\begin{lemma}[Entropy decomposition]
\label{lem:decomp}
Let $X\in\R^n$ have mixture density \eqref{eq:mix_pdf} with latent label $C$.
Then the differential entropy decomposes as
\begin{equation}
\label{eq:decomp}
h(X)=h(X\mid C)+I(X;C),
\end{equation}
where $I(X;C)$ is the mutual information (in bits) between continuous $X$ and discrete $C$.
\end{lemma}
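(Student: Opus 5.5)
The plan is to define every quantity directly as an integral against the joint law of $(C,X)$ and then split a logarithm. Under the model, $(C,X)$ has mixed joint density $\pi_c f_c(x)$ with respect to counting measure on $\{1,\dots,K\}$ times Lebesgue measure on $\R^n$. The marginal of $X$ is $p_X$ in \eqref{eq:mix_pdf}, and the conditional density of $X$ given $C=c$ is $f_c$. I will use the standard definitions
\[
h(X\mid C)=\sum_{c=1}^K \pi_c\, h(f_c)=-\sum_{c=1}^K \pi_c\int_{\R^n} f_c(x)\logtwo f_c(x)\,dx,
\]
\[
I(X;C)=\sum_{c=1}^K \pi_c\int_{\R^n} f_c(x)\logtwo\frac{f_c(x)}{p_X(x)}\,dx .
\]
The latter equals $D\bigl(P_{CX}\,\|\,P_C\otimes P_X\bigr)$, since $\pi_c f_c(x)/(\pi_c p_X(x))=f_c(x)/p_X(x)$.

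The key step is the pointwise identity
\[
-\logtwo p_X(x)=-\logtwo f_c(x)+\logtwo\frac{f_c(x)}{p_X(x)}.
\]
It holds wherever $f_c(x)>0$, and on that set $p_X(x)\ge \pi_c f_c(x)>0$. I integrate it against $\pi_c f_c(x)\,dx$ and sum over $c$. On the left, $\sum_c \pi_c f_c=p_X$, so the left side becomes $-\int p_X\logtwo p_X=h(X)$. The first term on the right gives $h(X\mid C)$ and the second gives $I(X;C)$. This is \eqref{eq:decomp}.

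The main obstacle is justifying the splitting of the integral, i.e.\ ruling out an $\infty-\infty$ situation. I will assume, as is implicit in the statement, that each $h(f_c)$ is finite, so that $h(X\mid C)$ is finite. The mutual-information integrand has a favourable structure. Since $f_c/p_X\le 1/\pi_c$, its positive part is bounded above by $\logtwo(1/\pi_c)$, and summing gives
\[
I(X;C)\le \sum_c \pi_c\logtwo(1/\pi_c)=H(C)<\infty .
\]
On the other hand, $I(X;C)\ge 0$ because it is a KL divergence, which is well defined in $[0,\infty]$. Hence $I(X;C)\in[0,H(C)]$ is a finite, absolutely convergent integral.

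It follows that the right-hand side of the pointwise identity is integrable against $\pi_c f_c(x)\,dx$. The left-hand side therefore is integrable too, so $h(X)$ exists, is finite, and linearity of the integral applies. Points where some $f_c$ vanishes contribute nothing to that component's integral, under the convention $0\log 0=0$. This completes the argument. As a byproduct it yields the sandwich \eqref{eq:intro_sandwich}.
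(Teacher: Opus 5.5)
Your proof is correct, and it reaches \eqref{eq:decomp} by a somewhat different route than the paper.

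The paper introduces the mixed joint entropy $h(X,C)$ and evaluates it twice. Through $p_{X,C}=\pi_c f_c(x)$ it gets $H(C)+h(X\mid C)$, and through $p_{X,C}=p_X(x)\,\mathbb{P}(C=c\mid x)$ it gets $h(X)+H(C\mid X)$. It then equates the two and uses $I(X;C)=H(C)-H(C\mid X)$ as the definition of mutual information. You instead take the divergence form $I(X;C)=\sum_c\pi_c D(f_c\|p_X)$ and integrate a single pointwise identity. This is the same log-splitting rearranged, since $f_c/p_X=\mathbb{P}(C=c\mid x)/\pi_c$, but it does away with the auxiliary joint entropy.

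The more substantive difference is rigor. The paper splits and cancels integrals without checking that $h(X)$ is finite. Your argument closes that gap: the bound $f_c/p_X\le 1/\pi_c$, together with the standard finiteness of the negative part of a KL integrand (from $\ln u\le u-1$), shows that $I(X;C)$ converges absolutely in $[0,H(C)]$. Consequently $h(X)$ is automatically finite whenever the component entropies are, and Theorem~\ref{thm:label_sandwich} follows at the same time.

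What the paper's route gives in exchange is that $H(C\mid X)$ appears explicitly. That is exactly the quantity the paper reuses for the upper-gap identity and for the tightness conditions of Proposition~\ref{prop:tightness_label_sandwich}.
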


\begin{proof}
We prove \eqref{eq:decomp} by writing the mixed joint entropy $h(X,C)$ in two different ways and equating them. The joint law of $(X,C)$ has mixed (continuous--discrete) density
\begin{equation}
\label{eq:joint_density}
p_{X,C}(x,c)=\mathbb{P}(C=c)\,p_{X\mid C}(x\mid c)=\pi_c f_c(x).
\end{equation}
Define the mixed joint entropy
\begin{equation}
\label{eq:joint_entropy_def}
h(X,C)\triangleq -\sum_{c=1}^K \int_{\R^n} p_{X,C}(x,c)\,\logtwo p_{X,C}(x,c)\,dx.
\end{equation}
Substitute $p_{X,C}(x,c)=\pi_c f_c(x)$ into \eqref{eq:joint_entropy_def}:
\begin{align}
h(X,C)
&=-\sum_{c=1}^K \int_{\R^n} \pi_c f_c(x)\,\logtwo\big(\pi_c f_c(x)\big)\,dx \nonumber\\
&=-\sum_{c=1}^K \int_{\R^n} \pi_c f_c(x)\,\big(\logtwo \pi_c+\logtwo f_c(x)\big)\,dx.
\label{eq:joint_entropy_expand}
\end{align}
Split the integral into two parts. For the first term,
\begin{align}
-\sum_{c=1}^K \int_{\R^n} \pi_c f_c(x)\,\logtwo\pi_c\,dx
&=-\sum_{c=1}^K \pi_c\logtwo\pi_c \int_{\R^n} f_c(x)\,dx \nonumber\\
&=-\sum_{c=1}^K \pi_c\logtwo\pi_c = H(C),
\end{align}
since each $f_c$ integrates to $1$. For the second term,
\begin{align}
-\sum_{c=1}^K \int_{\R^n} \pi_c f_c(x)\,\logtwo f_c(x)\,dx 
&=\sum_{c=1}^K \pi_c \Big(-\int_{\R^n} f_c(x)\logtwo f_c(x)\,dx\Big)\nonumber\\
&=\sum_{c=1}^K \pi_c\,h(f_c)=h(X\mid C).
\end{align}
Therefore
\begin{equation}
\label{eq:joint_entropy_H_plus_cond}
h(X,C)=H(C)+h(X\mid C).
\end{equation}
Now we factorize the joint density the other way. We can also write
\begin{equation}
\label{eq:factorize_other}
p_{X,C}(x,c)=p_X(x)\,\mathbb{P}(C=c\mid X=x).
\end{equation}
Substitute \eqref{eq:factorize_other} into \eqref{eq:joint_entropy_def}:
\begin{align}
 h(X,C) 
&=-\sum_{c=1}^K\!\! \int_{\R^n}\! p_X(x)\!\,\mathbb{P}(C=c\mid x)\,
\logtwo\!\!\big(p_X(x)\!\!\,\mathbb{P}(C=c\mid x)\big)\,dx \nonumber\\
&=-\sum_{c=1}^K \int_{\R^n} p_X(x)\,\mathbb{P}(C=c\mid x)\,\logtwo p_X(x)\,dx \nonumber\\
&\quad\ -\sum_{c=1}^K \int_{\R^n} p_X(x)\,\mathbb{P}(C=c\mid x)\,\logtwo \mathbb{P}(C=c\mid x)\,dx.
\label{eq:joint_entropy_expand2}
\end{align}
In the first term of \eqref{eq:joint_entropy_expand2}, $\logtwo p_X(x)$ does not depend on $c$, so
\begin{align}
 -\sum_{c=1}^K \int_{\R^n} p_X(x)\,\mathbb{P}(C=c\mid x)\,\logtwo p_X(x)\,dx 
&=-\int_{\R^n} p_X(x)\Big(\sum_{c=1}^K \mathbb{P}(C=c\mid x)\Big)\logtwo p_X(x)\,dx \nonumber\\
&=-\int_{\R^n} p_X(x)\logtwo p_X(x)\,dx \nonumber\\
&= h(X),
\end{align}
since $\sum_{c}\mathbb{P}(C=c\mid x)=1$. The second term in \eqref{eq:joint_entropy_expand2} equals
\begin{align}
 -\int_{\R^n} p_X(x)\sum_{c=1}^K \mathbb{P}(C=c\mid x)\logtwo \mathbb{P}(C=c\mid x)\,dx =H(C\mid X),
\end{align}
by definition of conditional entropy for discrete $C$ conditioned on continuous $X$. Hence
\begin{equation}
\label{eq:joint_entropy_h_plus_condC}
h(X,C)=h(X)+H(C\mid X).
\end{equation}
By equating \eqref{eq:joint_entropy_H_plus_cond} and \eqref{eq:joint_entropy_h_plus_condC}, we obtain 
\begin{align}
h(X)+H(C\mid X)=H(C)+h(X\mid C).
\end{align}
Finally, rearranging gives
\begin{align}
h(X)&=h(X\mid C)+H(C)-H(C\mid X)\nonumber\\
&=h(X\mid C)+I(X;C),
\end{align}
which completes the proof.
\end{proof}

\vspace{0.1cm}
Leveraging the entropy decomposition lemma, we provide universal bounds for the mixture distribution, stated in the following theorem.

\begin{theorem}[Universal label-sandwich bounds]
\label{thm:label_sandwich}
For any mixture distribution $p_X(x)=\sum_{c=1}^K \pi_c f_c(x)$, the differential entropy is bounded by
\begin{equation}
\label{eq:label_bounds}
h(X\mid C)\ \le\ h(X)\ \le\ h(X\mid C)+H(C).
\end{equation}
\end{theorem}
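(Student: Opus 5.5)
The plan is to start from Lemma~\ref{lem:decomp}, which gives $h(X)=h(X\mid C)+I(X;C)$. With that identity in hand, the theorem is equivalent to the two-sided bound $0\le I(X;C)\le H(C)$ on the label information. Each side will be proved separately, using the mixed continuous--discrete representation already set up in the proof of Lemma~\ref{lem:decomp}.

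\emph{Upper bound.} In the proof of Lemma~\ref{lem:decomp} we obtained $I(X;C)=H(C)-H(C\mid X)$, where
\begin{equation*}
H(C\mid X)=\int_{\R^n} p_X(x)\Big(-\sum_{c=1}^K \mathbb{P}(C=c\mid x)\logtwo \mathbb{P}(C=c\mid x)\Big)\,dx .
\end{equation*}
For each $x$, the posterior $\mathbb{P}(C=c\mid x)=\pi_c f_c(x)/p_X(x)$ is a probability mass function on the finite set $\{1,\dots,K\}$. Hence the integrand is the Shannon entropy of a discrete distribution and is pointwise nonnegative, so $H(C\mid X)\ge 0$. This gives $I(X;C)\le H(C)$ and therefore $h(X)\le h(X\mid C)+H(C)$.

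\emph{Lower bound.} Here I would write the mutual information as a divergence:
\begin{equation*}
I(X;C)=\sum_{c=1}^K \pi_c \int_{\R^n} f_c(x)\logtwo\frac{f_c(x)}{p_X(x)}\,dx=\sum_{c=1}^K \pi_c\, D(f_c\,\|\,p_X).
\end{equation*}
This identity follows directly by substituting $\mathbb{P}(C=c\mid x)=\pi_c f_c(x)/p_X(x)$ into $H(C)-H(C\mid X)$. Each $D(f_c\|p_X)\ge 0$ by Gibbs' inequality, i.e.\ Jensen applied to the concave $\logtwo$ under $f_c$. Note that $f_c\ll p_X$ holds automatically, because $p_X\ge \pi_c f_c$ with $\pi_c>0$. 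Hence $I(X;C)\ge 0$ and $h(X)\ge h(X\mid C)$. Equivalently, this is concavity of differential entropy, and it could be stated that way instead.

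\emph{Main obstacle.} The only delicate point is well-definedness of the terms: one needs $h(X\mid C)$ (i.e.\ each $h(f_c)$) and $h(X)$ to be finite, so that the decomposition and the subtraction are legitimate rather than $\infty-\infty$. I would handle this by assuming each $h(f_c)$ is finite, as is implicit throughout the paper. Then $0\le I(X;C)\le H(C)\le \logtwo K<\infty$ shows that $I(X;C)$ is finite, and Lemma~\ref{lem:decomp} transfers finiteness to $h(X)$. Beyond this point, both inequalities are immediate.
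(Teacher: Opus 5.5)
Your proposal is correct and follows the paper's proof. Both reduce the theorem via Lemma~\ref{lem:decomp} to $0\le I(X;C)\le H(C)$ and use $H(C\mid X)\ge 0$ for the upper bound. Your extras are refinements rather than a different route: you justify $I(X;C)\ge 0$ explicitly as $\sum_c \pi_c D(f_c\|p_X)\ge 0$, which the paper simply cites as nonnegativity of mutual information, and you make explicit the finiteness assumption that the paper leaves implicit.
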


\begin{proof}
By Lemma~\ref{lem:decomp}, $h(X)=h(X\mid C)+I(X;C)$.
Since mutual information is nonnegative, $I(X;C)\ge 0$, we obtain $h(X)\ge h(X\mid C)$.

Moreover, because $C$ is discrete,
\begin{equation}
I(X;C)=H(C)-H(C\mid X)\le H(C)
\end{equation}
since conditional entropy $H(C\mid X)\ge 0$. Therefore
\begin{equation}
h(X)=h(X\mid C)+I(X;C)\le h(X\mid C)+H(C),
\end{equation}
establishing \eqref{eq:label_bounds}.
\end{proof}


A key implication is that \eqref{eq:label_bounds} is \emph{universal}: it does not rely on any
parametric assumption on the component family $\{f_c\}$ beyond existence/fin\-iteness of the
relevant entropies. In particular, the bound holds for
Gaussian mixtures, Laplacian mixtures, uniform mixtures (bounded support), and hybrid mixtures
(e.g., Gaussian--Laplacian, uniform--Gaussian), as well as mixtures of arbitrary continuous densities.

The reason is structural: the proof uses only the decomposition identity
\[
h(X)=h(X\mid C)+I(X;C),
\]
together with the purely information-theoretic inequality $0\le I(X;C)\le H(C)$ for discrete $C$.
No property of $f_c$ (Gaussianity, log-concavity, bounded support, etc.) is required.

 The interval
\[
\big[h(X\mid C),\; h(X\mid C)+H(C)\big]
\]
can be read as an  uncertainty budget  :
\begin{itemize}[leftmargin=*]
\item $h(X\mid C)$ is the average uncertainty \emph{within} components (intra-component spread).
\item $H(C)$ is the uncertainty of the label itself (inter-component mixing uncertainty).
\item The mixture entropy $h(X)$ differs from $h(X\mid C)$ by exactly $I(X;C)$, i.e.,
      the amount of label information revealed by observing $X$.
\end{itemize}
Thus the mixture entropy is always  conditional entropy plus a penalty   that cannot exceed $H(C)$. Even when $h(X)$ has no closed form, Theorem~\ref{thm:label_sandwich} yields a deterministic bracket as long as $h(X\mid C)$ is computable. For many families, $h(X\mid C)=\sum_c \pi_c h(f_c)$ is closed form, so \eqref{eq:label_bounds} provides immediately computable bounds on $h(X)$.

\subsection{Analysis of the Gap}
\label{subsubsec:gap_identities}

Theorem~\ref{thm:label_sandwich} yields \emph{exact} expressions for the two gaps:
\begin{align}
\label{eq:gap_lower_exact}
h(X)-h(X\mid C) &= I(X;C),\\
\label{eq:gap_upper_exact}
\bigl(h(X\mid C)+H(C)\bigr)-h(X) &= H(C\mid X).
\end{align}
Hence:
\begin{itemize}[leftmargin=*]
\item the \emph{lower-bound gap} equals how informative $X$ is about the label $C$;
\item the \emph{upper-bound gap} equals the residual label ambiguity after observing $X$.
\end{itemize}
These identities are not approximations; they hold exactly for any mixture. Then, an natural question is when these bounds can meet? Because the two gaps \eqref{eq:gap_lower_exact}--\eqref{eq:gap_upper_exact} are information measures, tightness conditions can be stated precisely. The following proposition states these conditions. 

\begin{proposition}[Tightness conditions for the label-sandwich bounds]
\label{prop:tightness_label_sandwich}
Under the mixture model $p_X(x)=\sum_{c=1}^K\pi_c f_c(x)$:
\begin{enumerate}[leftmargin=*, itemsep=2pt]
\item \textbf{Lower bound tightness.}
      The equality $h(X)=h(X\mid C)$ holds if and only if
      \begin{equation}
      \label{eq:lower_tight_iff}
      I(X;C)=0 \Longleftrightarrow X \perp C \Longleftrightarrow f_c(x)=p_X(x)\ \text{a.e. for all }c.
      \end{equation}
\item \textbf{Upper bound tightness.}
      The equality $h(X)=h(X\mid C)+H(C)$ holds if and only if
      \begin{equation}
      \label{eq:upper_tight_iff}
      H(C\mid X)=0 \Longleftrightarrow \exists\ \text{measurable }g:\R^n\to\{1,\dots,K\}
      \end{equation}
      such that $C=g(X)$ almost surely. 
\item \textbf{Bounds meet (interval collapses).}
      The bounds meet, i.e.\ $h(X\mid C)=h(X)=h(X\mid C)+H(C)$, if and only if
      \begin{equation}
      \label{eq:meet_condition}
      H(C)=0\quad \text{(equivalently, }\exists c^\star\text{ with }\pi_{c^\star}=1\text{)}.
      \end{equation}
\end{enumerate}
\end{proposition}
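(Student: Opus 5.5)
The plan is to reduce all three parts to the exact gap identities \eqref{eq:gap_lower_exact}--\eqref{eq:gap_upper_exact}, which follow from Lemma~\ref{lem:decomp}, and then characterize when the information measures $I(X;C)$ and $H(C\mid X)$ vanish. Throughout, I assume $h(X\mid C)$ and $H(C)$ are finite, so that subtracting them is legitimate.

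\textbf{Part 1 (lower bound).} By \eqref{eq:gap_lower_exact}, $h(X)=h(X\mid C)$ holds iff $I(X;C)=0$. Next I write the mutual information as a weighted Kullback--Leibler divergence:
\[
I(X;C)=\sum_c \pi_c\, D(f_c\,\|\,p_X).
\]
This follows from $\mathbb{P}(C=c\mid x)=\pi_c f_c(x)/p_X(x)$ and the definition $I=H(C)-H(C\mid X)$, after regrouping the integrand. Since every $\pi_c>0$ and $D\ge 0$ with equality iff the two densities agree a.e., we get $I=0$ iff $f_c=p_X$ a.e.\ for every $c$. Finally, $f_c=p_X$ for all $c$ says exactly that the conditional law of $X$ given $C$ does not depend on $C$, which is the definition of $X\perp C$. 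This closes the chain of equivalences in \eqref{eq:lower_tight_iff}.

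\textbf{Part 2 (upper bound).} By \eqref{eq:gap_upper_exact}, equality holds iff $H(C\mid X)=0$. Write
\[
H(C\mid X)=\int p_X(x)\,\phi(x)\,dx,
\qquad
\phi(x)=-\sum_c \mathbb{P}(C=c\mid x)\logtwo \mathbb{P}(C=c\mid x)\ge 0.
\]
Hence $H(C\mid X)=0$ iff $\phi=0$ for $p_X$-a.e.\ $x$. This in turn holds iff, for $p_X$-a.e.\ $x$, the posterior is a point mass $\delta_{g(x)}$, because discrete entropy vanishes only on degenerate distributions.

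\emph{Forward direction.} Define $g(x)=\arg\max_c \pi_c f_c(x)$, breaking ties by the smallest index. This $g$ is measurable as a finite argmax of measurable functions. Then
\[
\mathbb{P}(C\neq g(X))=\int p_X(x)\,\bigl(1-\mathbb{P}(C=g(x)\mid x)\bigr)\,dx=0,
\]
so $C=g(X)$ a.s.

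\emph{Converse.} If $C=g(X)$ a.s., then the posterior $\mathbb{P}(C=c\mid X=x)=\mathbf{1}\{g(x)=c\}$ for a.e.\ $x$ (a version of the conditional probability), so $\phi=0$ a.e. Equivalently, the supports of the $f_c$ are a.e.\ disjoint.

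\textbf{Part 3 (bounds meet).} The two bounds meet iff $h(X\mid C)=h(X\mid C)+H(C)$, i.e.\ iff $H(C)=0$. Since $H(C)=-\sum_c\pi_c\logtwo\pi_c$ is a sum of nonnegative terms, it vanishes iff each $\pi_c\in\{0,1\}$, i.e.\ iff some $\pi_{c^\star}=1$. When $H(C)=0$, both inequalities in Theorem~\ref{thm:label_sandwich} collapse, giving $h(X)=h(X\mid C)$.

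One should note a tension with the standing assumption $\pi_c>0$ in \eqref{eq:mix_pdf}: under it, $H(C)=0$ forces $K=1$. I will state this as the degenerate case, which is consistent with the claim. Alternatively, Part 3 can be read as requiring both $X\perp C$ and $C=g(X)$, which together force $C$ to be constant.

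\textbf{Main obstacle.} The only nonroutine points are the measure-theoretic details in Part 2: choosing a measurable version of the posterior and building $g$ from it. The argmax construction above handles both. The rest is bookkeeping with the gap identities.
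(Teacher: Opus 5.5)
Your proposal is correct and follows the paper's route: it reduces each part to the gap identities \eqref{eq:gap_lower_exact}--\eqref{eq:gap_upper_exact} and then characterizes when $I(X;C)$ and $H(C\mid X)$ vanish. You fill in details the paper only cites (the decomposition $I(X;C)=\sum_c\pi_c D(f_c\,\|\,p_X)$ in Part~1, an explicit measurable argmax rule $g$ in Part~2), and in Part~3 you obtain $H(C)=0$ directly by equating the two outer bounds instead of combining Parts~1--2 through $I(X;C)=H(C)-H(C\mid X)$, which is equally valid and slightly shorter.
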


\begin{proof}
\textbf{1) Lower bound tightness.}
From \eqref{eq:gap_lower_exact}, $h(X)=h(X\mid C)$ holds if and only if $I(X;C)=0$.
A standard information-theoretic equivalence is that $I(X;C)=0$ if and only if $X$ and $C$ are independent.
Independence means $p_{X\mid C}(x\mid c)=p_X(x)$ for every $c$ (almost everywhere in $x$).
Since $p_{X\mid C}(x\mid c)=f_c(x)$, this is exactly $f_c(x)=p_X(x)$ a.e.\ for all $c$.

\textbf{2) Upper bound tightness.}
From \eqref{eq:gap_upper_exact}, $h(X)=h(X\mid C)+H(C)$ holds if and only if $H(C\mid X)=0$.
For discrete $C$, $H(C\mid X)=0$ if and only if the conditional pmf $\mathbb{P}(C=\cdot\mid X=x)$ is a point mass
for almost every $x$ (i.e., $C$ is almost surely determined by $X$).
Equivalently, there exists a measurable decision rule $g$ such that $C=g(X)$ almost surely.

\textbf{3) Bounds meet.}
If the bounds meet, then both equalities in items 1) and 2) hold simultaneously, so $I(X;C)=0$ and $H(C\mid X)=0$.
But the identity $I(X;C)=H(C)-H(C\mid X)$ implies that if $H(C\mid X)=0$ then $I(X;C)=H(C)$.
Thus $I(X;C)=0$ and $H(C\mid X)=0$ together imply $H(C)=0$.
Conversely, if $H(C)=0$, then $C$ is deterministic (one component has probability one) and $h(X)=h(X\mid C)$ while
$h(X)=h(X\mid C)+H(C)$ holds trivially, so the interval collapses.
\end{proof}

\begin{remark}[Non-degenerate mixtures]
If $H(C)>0$ (at least two components have nonzero weight), then the interval in
Theorem~\ref{thm:label_sandwich} has positive width $H(C)$, so the two bounds cannot coincide exactly.
In practice, one bound becomes \emph{effectively tight} in the appropriate regime:
heavy overlap gives $I(X;C)\approx 0$ (lower bound nearly tight), while negligible overlap gives $H(C\mid X)\approx 0$
(upper bound nearly tight).
\end{remark}

\section{Universal Approximation}
\label{sec:approx}

The universal bounds in Theorem~\ref{thm:label_sandwich} are conceptually appealing:
they hold for \emph{any} mixture distribution and require no assumptions beyond the
existence of component entropies.  However, these bounds alone do not resolve the
practical problem of estimating $h(X)$, because the gap between
$h(X\mid C)$ and $h(X\mid C)+H(C)$ is precisely the mutual information
$I(X;C)$, which can vary continuously from $0$ to $H(C)$ as the degree of
component overlap changes.  In other words, the bounds are tight only at the
extremes—complete overlap or complete separation—but may be loose in the
intermediate regime where most practical systems operate.

Our goal in this section is therefore to construct a \emph{single closed-form expression} that tracks $h(X)$ across all overlap regimes, while retaining exactness in the two limits where the information-theoretic structure is fully understood.  The construction proceeds in two steps. First, we invoke Jensen’s inequality to obtain a computable lower bound in terms of pairwise component overlaps.  This step removes the logarithm from inside the integral and yields an expression that depends only on $\{z_{c,d}\}$. Second, we observe that this Jensen bound has a systematic bias that is independent of the mixture weights but depends only on the underlying component family.  By compensating for this bias with an analytically chosen offset, we obtain an approximation that is exact when all components coincide and asymptotically exact when components are well separated.  A final clipping step enforces consistency with the universal bounds of Theorem~\ref{thm:label_sandwich}.

 \subsection{Useful Lemmas}
\label{subsec:jensen_overlap}
Before presenting the approximate expressions of $h(X)$, we introduce lemmas that form the foundation of our closed form approximation framework. The first lemma provides a deterministic lower bound on the mixture entropy by moving the logarithm outside the intractable integral via Jensen’s inequality. The second lemma characterizes the R\'enyi-2 (collision) entropy of a single component density and the third lemma shows the offset-compensation mechanism to calibrate the Jensen bound using R\'enyi-2 entropy.

\begin{lemma}[Jensen/overlap lower bound]
\label{lem:jensen_lower}
For any mixture density $p_X(x)=\sum_{c=1}^K \pi_c f_c(x)$,
\begin{equation}
\label{eq:hJ_def}
h(X)\ \ge\ h_{\sf L}(X)
\triangleq
-\sum_{c=1}^K \pi_c\,
\logtwo\! \left(\sum_{d=1}^K \pi_d\,z_{c,d} \right),
\end{equation}
where $z_{c,d}=\int_{\R^n} f_c(x)f_d(x)\,dx$.
\end{lemma}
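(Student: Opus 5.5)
The plan is to decompose the mixture entropy by the mixture weights and then apply Jensen's inequality once per component. Writing $p_X=\sum_c\pi_c f_c$ inside the outer integral of the definition of $h(X)$ gives
\begin{equation}
h(X)=-\sum_{c=1}^K \pi_c\int_{\R^n} f_c(x)\,\logtwo p_X(x)\,dx
=\sum_{c=1}^K \pi_c\,\E_{X\sim f_c}\!\left[-\logtwo p_X(X)\right].
\end{equation}
The logarithm is kept with the full mixture density. Only the measure in front is split into components. This turns $h(X)$ into a $\pi$-weighted average of cross-entropies of $p_X$ relative to each $f_c$.

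For each fixed $c$, I will use the convexity of $t\mapsto-\logtwo t$ on $(0,\infty)$. By Jensen's inequality under the probability density $f_c$,
\begin{equation}
\E_{f_c}\!\left[-\logtwo p_X(X)\right]\ \ge\ -\logtwo \E_{f_c}\!\left[p_X(X)\right]
=-\logtwo\!\int_{\R^n} f_c(x)\sum_{d=1}^K\pi_d f_d(x)\,dx
=-\logtwo\!\left(\sum_{d=1}^K \pi_d z_{c,d}\right),
\end{equation}
where the last step exchanges the finite sum and the integral, which is allowed by linearity. Multiplying by $\pi_c$ and summing over $c$ then gives exactly $h_{\sf L}(X)$.

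The only delicate step is technical, namely making sure that Jensen is applied legitimately. The quantity $\E_{f_c}[p_X(X)]=\sum_d\pi_d z_{c,d}$ is strictly positive, because $z_{c,c}=\int f_c^2>0$ for any density. It could, however, be $+\infty$ if some $f_c\notin L^2$, for example the uniform density on a set of measure zero or a density with an integrable singularity. In that case the right-hand side is $-\infty$ and the inequality holds trivially, so I will treat it as a degenerate case. I will also assume, as the paper does throughout, that $h(X)$ and each cross-entropy term are well defined, so that splitting the integral by linearity is justified; to handle the positive and negative parts separately, I can point to the finiteness of the entropies assumed after Theorem~\ref{thm:label_sandwich}. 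Beyond this, the argument is a direct two-line calculation.
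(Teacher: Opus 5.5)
Your proposal is correct and follows the paper's proof essentially step for step: split $h(X)=\sum_c \pi_c\,\E_{f_c}[-\logtwo p_X(X)]$, apply Jensen's inequality with the convex function $-\logtwo$ under each $f_c$, and identify $\E_{f_c}[p_X(X)]=\sum_d \pi_d z_{c,d}$. Your integrability remarks go beyond what the paper states and are sound, with one slip: the ``uniform density on a set of measure zero'' example is vacuous because no such density exists, whereas the integrable-singularity example is a valid instance of $f_c\notin L^2$.
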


\begin{proof}
Start from the entropy definition and expand $p_X$ by conditioning on $C$:
\begin{align}
h(X)
&=-\int_{\R^n} p_X(x)\logtwo p_X(x)\,dx \nonumber\\
&=-\int_{\R^n} \Big(\sum_{c=1}^K \pi_c f_c(x)\Big)\logtwo p_X(x)\,dx \nonumber\\
&=-\sum_{c=1}^K \pi_c \int_{\R^n} f_c(x)\logtwo p_X(x)\,dx.
\label{eq:h_expand_by_components}
\end{align}
Fix an index $c$. Let $Y_c\sim f_c$. Then
\begin{equation}
\label{eq:component_term_as_expectation}
-\int_{\R^n} f_c(x)\logtwo p_X(x)\,dx
=\E\!\left[-\logtwo p_X(Y_c)\right].
\end{equation}
The function $\phi(t)=-\logtwo t$ is convex on $(0,\infty)$.
Therefore Jensen's inequality gives
\begin{equation}
\label{eq:jensen_step}
\E\!\left[-\logtwo p_X(Y_c)\right]
\ge -\logtwo\!\left(\E[p_X(Y_c)]\right).
\end{equation}
Compute $\E[p_X(Y_c)]$ explicitly:
\begin{align}
\E[p_X(Y_c)]
&=\int_{\R^n} f_c(x)\,p_X(x)\,dx \nonumber\\
&=\int_{\R^n} f_c(x)\,\left(\sum_{d=1}^K \pi_d f_d(x)\right)\,dx \nonumber\\
&=\sum_{d=1}^K \pi_d \int_{\R^n} f_c(x)f_d(x)\,dx \nonumber\\
&=\sum_{d=1}^K \pi_d z_{c,d}.
\label{eq:overlap_expectation}
\end{align}
Plugging \eqref{eq:overlap_expectation} into \eqref{eq:jensen_step} and then into \eqref{eq:h_expand_by_components},
\begin{align}
h(X)
&\ge -\sum_{c=1}^K \pi_c \logtwo\!\left(\sum_{d=1}^K \pi_d z_{c,d}\right)
= h_{\sf L}(X),
\end{align}
which proves \eqref{eq:hJ_def}.
\end{proof}

\begin{lemma}[R\'enyi--2 (collision) entropy of a density]
\label{lem:renyi2}
Let $f$ be a probability density on $\R^n$ such that $\int_{\R^n} f(x)^2\,dx<\infty$.
Define the collision entropy
\begin{equation}
\label{eq:renyi2_def}
h_2(f)\triangleq -\logtwo \int_{\R^n} f(x)^2\,dx.
\end{equation}
If $X\sim f$, then
\begin{equation}
\label{eq:renyi2_interp}
h_2(f)=-\logtwo \E[f(X)].
\end{equation}
Moreover, $h_2(f)\le h(f)$.
\end{lemma}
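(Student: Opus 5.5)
The plan has two parts: first identify the collision entropy with an expectation, then compare it with $h(f)$ by a one-component Jensen argument.

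\textbf{Identity \eqref{eq:renyi2_interp}.} Let $X\sim f$. By definition of expectation,
\[
\E[f(X)]=\int_{\R^n} f(x)\,f(x)\,dx=\int_{\R^n} f(x)^2\,dx,
\]
which is finite by hypothesis. It is also strictly positive, since $f$ is a density and hence $f>0$ on a set of positive measure. Taking $-\logtwo$ of both sides gives \eqref{eq:renyi2_interp}. This is precisely the one-component case of the expectation computation \eqref{eq:overlap_expectation} in the proof of Lemma~\ref{lem:jensen_lower}, with $z_{c,c}=\int f_c^2$.

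\textbf{Inequality $h_2(f)\le h(f)$.} Apply Lemma~\ref{lem:jensen_lower} to the trivial mixture with $K=1$, $\pi_1=1$, $f_1=f$. Then $p_X=f$ and the lemma yields
\[
h(f)\ \ge\ -\logtwo z_{1,1}=-\logtwo\int_{\R^n} f(x)^2\,dx=h_2(f).
\]
Equivalently and directly: write $h(f)=\E[-\logtwo f(X)]$ and use convexity of $t\mapsto-\logtwo t$ on $(0,\infty)$ to get
\[
\E[-\logtwo f(X)]\ \ge\ -\logtwo \E[f(X)].
\]
The right-hand side equals $h_2(f)$ by the identity just established.

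\textbf{Technical points.} The only delicate matter is integrability. We need $f(X)>0$ almost surely, which holds because $\mathbb{P}(f(X)=0)=\int_{\{f=0\}}f=0$, so Jensen's inequality is applied on $(0,\infty)$. We also need $h(f)$ to be well defined, possibly as $+\infty$. The hypothesis $\int f^2<\infty$ makes $\E[f(X)]$ finite, and the negative part of $-\logtwo f(X)$ is controlled by $\logtwo^+ f(X)\le f(X)/\ln 2$, which is integrable. So the expectation $\E[-\logtwo f(X)]$ exists in $(-\infty,+\infty]$, and the inequality holds trivially if it is $+\infty$. Beyond these measure-theoretic checks the argument is routine.
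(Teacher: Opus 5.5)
Your proposal is correct and follows the paper's proof: identify $\E[f(X)]=\int f^2$, then apply Jensen's inequality to the convex function $-\logtwo(\cdot)$; your $K=1$ reduction to Lemma~\ref{lem:jensen_lower} is the same argument in another form. Your integrability checks go beyond what the paper writes down: $f(X)>0$ almost surely, and $\logtwo^+ f(X)\le f(X)/\ln 2$, so $\E[-\logtwo f(X)]$ exists in $(-\infty,+\infty]$.
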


\begin{proof}
If $X\sim f$, then by definition of expectation,
\begin{align}
\E[f(X)] = \int_{\R^n} f(x)\,f(x)\,dx=\int_{\R^n} f(x)^2\,dx,
\end{align}
which gives \eqref{eq:renyi2_interp} and is equivalent to \eqref{eq:renyi2_def}.

To show $h_2(f)\le h(f)$, note that for $X\sim f$,
\begin{equation}
h(f)=-\E[\logtwo f(X)].
\end{equation}
Apply Jensen's inequality to the convex function $-\logtwo(\cdot)$:
\begin{align}
h(f)=-\E[\logtwo f(X)]
\ge -\logtwo \E[f(X)]
= h_2(f),
\end{align}
proving $h_2(f)\le h(f)$.
\end{proof}

\begin{lemma}[Offset compensation via the Shannon--collision gap]
\label{lem:offset_comp}
Let $p_X(x)=\sum_{c=1}^K \pi_c f_c(x)$.
Assume that for each component $f_c$, the gap
\begin{equation}
\label{eq:gap_def}
\Delta(f_c)\triangleq h(f_c)-h_2(f_c)
\end{equation}
is a known constant depending only on the component family (and dimension), not on location/scale parameters.
Define
\begin{equation}
\label{eq:Delta_c_def}
\Delta_c\triangleq \Delta(f_c),\qquad \bar\Delta\triangleq \sum_{c=1}^K \pi_c\Delta_c,
\end{equation}
and the offset-compensated Jensen functional
\begin{equation}
\label{eq:hJoff_def}
{\hat h}(X)\triangleq h_{\sf L}(X)+\bar\Delta.
\end{equation}
Then:
\begin{enumerate}[leftmargin=*, itemsep=2pt]
\item \textbf{Complete-overlap exactness.} If $f_1=\cdots=f_K=f$, then ${\hat h}(X)=h(f)=h(X)$.
\item \textbf{Zero-overlap exactness.} If $z_{c,d}=0$ for all $c\neq d$ (no cross-overlap), then
      ${\hat h}(X)=h(X\mid C)+H(C)$.
\item \textbf{Family constants.} For the families in this paper,
\begin{equation}
\label{eq:offset_constants}
\Delta_c=
\begin{cases}
\displaystyle \frac{n}{2}\log_2\!\Big(\frac{e}{2}\Big), & f_c\ \text{Gaussian on }\R^n,\\[0.9ex]
\displaystyle n\log_2\!\Big(\frac{e}{2}\Big), & f_c\ \text{factorized Laplacian on }\R^n,\\[0.9ex]
0, & f_c\ \text{uniform on a measurable set}.
\end{cases}
\end{equation}
\end{enumerate}
\end{lemma}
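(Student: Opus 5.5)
The three items can be handled separately, and each reduces to evaluating $h_{\sf L}(X)$ directly and comparing it with the decomposition of Lemma~\ref{lem:decomp}. The common identity is $z_{c,c}=\int f_c^2 = 2^{-h_2(f_c)}$, which is just the definition \eqref{eq:renyi2_def} of Lemma~\ref{lem:renyi2} rewritten.

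\textbf{Complete overlap.} If $f_1=\cdots=f_K=f$, then $z_{c,d}=\int f^2$ for all $c,d$. Hence $\sum_d \pi_d z_{c,d}=\int f^2$ and
\[
h_{\sf L}(X)=-\sum_c\pi_c\logtwo\!\int f^2=h_2(f).
\]
All $\Delta_c$ coincide with $\Delta(f)=h(f)-h_2(f)$, so $\bar\Delta=h(f)-h_2(f)$ and $\hat h(X)=h(f)$. The mixture is $p_X=f$, so $h(X)=h(f)$.

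\textbf{Zero cross-overlap.} If $z_{c,d}=0$ for $c\neq d$, then $\sum_d\pi_d z_{c,d}=\pi_c z_{c,c}=\pi_c 2^{-h_2(f_c)}$. Therefore
\[
h_{\sf L}(X)=-\sum_c\pi_c\logtwo\pi_c+\sum_c\pi_c h_2(f_c)=H(C)+\sum_c\pi_c h_2(f_c).
\]
Adding $\bar\Delta=\sum_c\pi_c\bigl(h(f_c)-h_2(f_c)\bigr)$ gives $H(C)+\sum_c\pi_c h(f_c)=h(X\mid C)+H(C)$. The claim is only the value of $\hat h$, so it is not necessary to argue that $h(X)$ equals this. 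If desired, one can add that $z_{c,d}=0$ forces $f_cf_d=0$ a.e.; the supports are then essentially disjoint, $C$ is a function of $X$, and Proposition~\ref{prop:tightness_label_sandwich} gives equality with $h(X)$.

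\textbf{Family constants.} Here one computes $h$ and $\int f^2$ in closed form for each family and checks that location and scale cancel.

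\emph{Gaussian} $\mathcal N(\mu,\Sigma)$. We have $h=\tfrac12\logtwo\bigl((2\pi e)^n|\Sigma|\bigr)$. The product $f^2$ is, up to normalization, a Gaussian with covariance $\Sigma/2$, which gives $\int f^2=(4\pi)^{-n/2}|\Sigma|^{-1/2}$. Hence $h_2=\tfrac12\logtwo\bigl((4\pi)^n|\Sigma|\bigr)$, and the difference is $\tfrac n2\logtwo(e/2)$.

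\emph{Factorized Laplacian} with scales $b_i$. Both $h$ and $\int f^2$ factorize over coordinates, so the gap is additive and it suffices to treat one coordinate. There $h=\logtwo(2eb)$ and $\int f^2=\int \frac{1}{4b^2}e^{-2|x|/b}\,dx=\frac1{4b}$, so $h_2=\logtwo(4b)$ and the gap is $\logtwo(e/2)$. Over $n$ coordinates this gives $n\logtwo(e/2)$.

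\emph{Uniform} on a set $A$ with $0<|A|<\infty$. We have $f=1/|A|$ on $A$, so $h=\logtwo|A|$ and $\int f^2=1/|A|$, giving $h_2=\logtwo|A|$ and gap $0$.

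The only step requiring any care is the Gaussian overlap integral, which needs completing the square and a determinant computation. The rest is bookkeeping: keeping the offset weighted by $\pi_c$, so that it cancels the $h_2$ terms exactly in the zero-overlap case.
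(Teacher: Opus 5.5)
Your proof is correct and follows the paper's argument essentially step for step: you evaluate $h_{\sf L}$ directly in the complete-overlap and zero-cross-overlap cases, add the $\pi_c$-weighted offset so that the $h_2$ terms cancel, and then compute $h$ and $\int f^2$ in closed form for the Gaussian, factorized Laplacian and uniform families. Your optional remark goes slightly beyond the paper's proof of item 2: $z_{c,d}=0$ forces $f_cf_d=0$ a.e., hence $H(C\mid X)=0$, so $\hat h(X)$ also equals $h(X)$ there. That remark is valid.
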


\begin{proof}
\textbf{1) Complete overlap.}
Assume $f_c\equiv f$ for all $c$. Then $p_X=f$ and every overlap is the same:
\begin{align}
z_{c,d}=\int_{\R^n} f(x)f(x)\,dx=\int_{\R^n} f(x)^2\,dx,\qquad \forall c,d.
\end{align}
Therefore for any $c$,
\begin{align}
\sum_{d=1}^K \pi_d z_{c,d}
&=\left(\sum_{d=1}^K \pi_d\right)\int_{\R^n} f(x)^2\,dx\nonumber\\
&=\int_{\R^n} f(x)^2\,dx.
\end{align}
Plug into the definition of $h_{\sf L}(X)$:
\begin{align}
h_{\sf L}(X)
&=-\sum_{c=1}^K \pi_c \logtwo\!\Big(\int_{\R^n} f(x)^2\,dx\Big)\nonumber\\
&=-\logtwo\!\int_{\R^n} f(x)^2\,dx
= h_2(f),
\end{align}
using $\sum_c\pi_c=1$. By assumption, $\Delta(f)=h(f)-h_2(f)$ is a constant for the family, and
$\bar\Delta=\sum_c\pi_c\Delta(f)=\Delta(f)$. Hence
\begin{align}
{\hat h}(X)
&=h_{\sf L}(X)+\bar\Delta \nonumber\\
&=h_2(f)+\big(h(f)-h_2(f)\big) \nonumber\\
&=h(f)=h(X).
\end{align}

\textbf{2) Zero overlap.}
Assume $z_{c,d}=0$ for all $c\neq d$. Then for each $c$,
\begin{align}
\sum_{d=1}^K \pi_d z_{c,d}=\pi_c z_{c,c}.
\end{align}
Substitute into $h_{\sf L}(X)$:
\begin{align}
h_{\sf L}(X)
&=-\sum_{c=1}^K \pi_c \logtwo(\pi_c z_{c,c}) \nonumber\\
&=-\sum_{c=1}^K \pi_c \logtwo \pi_c\ -\ \sum_{c=1}^K \pi_c \logtwo z_{c,c}.
\label{eq:hL_zero_overlap_expand}
\end{align}
The first term equals $H(C)$ by definition.
For the second term, note that $z_{c,c}=\int f_c(x)^2dx$, hence
\begin{align}
-\logtwo z_{c,c} = -\logtwo\!\int f_c(x)^2\,dx = h_2(f_c).
\end{align}
Therefore \eqref{eq:hL_zero_overlap_expand} becomes
\begin{align}
h_{\sf L}(X)=H(C)+\sum_{c=1}^K \pi_c\,h_2(f_c).
\end{align}
Add the offset:
\begin{align}
{\hat h}(X)
&=H(C)+\sum_{c=1}^K \pi_c\,h_2(f_c)+\sum_{c=1}^K \pi_c\Delta_c \nonumber\\
&=H(C)+\sum_{c=1}^K \pi_c\big(h_2(f_c)+h(f_c)-h_2(f_c)\big)\nonumber\\
&=H(C)+\sum_{c=1}^K \pi_c h(f_c)\nonumber\\
&=H(C)+h(X\mid C),
\end{align}
as claimed.

\textbf{3) Family constants.}
\emph{Gaussian:} for $f(x)=\mathcal{N}(\mu,\Sigma)$ on $\R^n$,
\begin{align}
h(f)&=\frac12\logtwo\big((2\pi e)^n|\Sigma|\big).
\end{align}
Also,
\begin{align}
\int_{\R^n} f(x)^2\,dx
&=\int \frac{1}{(2\pi)^{n}|\Sigma|}\exp\!\Big(-(x-\mu)^\top \Sigma^{-1}(x-\mu)\Big)\,dx \nonumber\\
&=\frac{1}{(2\pi)^{n}|\Sigma|}\cdot (\pi)^{n/2}|\Sigma|^{1/2}
=(4\pi)^{-n/2}|\Sigma|^{-1/2},
\end{align}
using the standard Gaussian integral $\int \exp(-y^\top A y)\,dy=\pi^{n/2}|A|^{-1/2}$ for $A\succ 0$
(with $A=\Sigma^{-1}$ after shifting).
Thus
\begin{align}
h_2(f)
&=-\logtwo\big((4\pi)^{-n/2}|\Sigma|^{-1/2}\big)
=\frac12\logtwo\big((4\pi)^n|\Sigma|\big).
\end{align}
Therefore
\begin{align}
\Delta_{\sf G}=h(f)-h_2(f)
=\frac12\logtwo\!\left(\frac{(2\pi e)^n}{(4\pi)^n}\right)
=\frac{n}{2}\logtwo\!\Big(\frac{e}{2}\Big).
\end{align}

\emph{Factorized Laplacian:} in one dimension, $f(x)=\frac{1}{2b}e^{-|x-\mu|/b}$.
Then
\begin{align}
h(f)&=\logtwo(2eb)\end{align}
and
\begin{align}
\int_{0}^{\infty} f(x)^2dx
&=\int_{0}^{\infty} \frac{1}{4b^2}e^{-2|x-\mu|/b}\,dx \nonumber\\
&=\frac{1}{4b^2} 2\int_{0}^{\infty}e^{-2t/b}\,dt \nonumber\\
&=\frac{1}{4b},
\end{align}
hence $h_2(f)=\logtwo(4b)$ and $\Delta=\logtwo(e/2)$.
For $n$ independent coordinates, entropies add, giving $\Delta_{\sf L}=n\logtwo(e/2)$.

\emph{Uniform on a set:} for $f(x)=\frac{1}{|\mathcal{A}|}\mathbf{1}\{x\in\mathcal{A}\}$,
$h(f)=\logtwo|\mathcal{A}|$ and $\int f^2=|\mathcal{A}|^{-1}$ so $h_2(f)=\logtwo|\mathcal{A}|$ and $\Delta_{\sf U}=0$.
This completes the proof.
\end{proof}

 \subsection{A Tight Approximation in Closed Form}
\label{subsec:thm2}
Using the lemmas above, we derive a practically tight closed-form approximation valid across mixture regimes.

\begin{theorem}[Approximation in closed form]
\label{thm:tight_approx}
Define the component-wise offsets
\begin{equation}
\label{eq:Delta_c_def}
\Delta_c=
\begin{cases}
\displaystyle \frac{n}{2}\log_2\!\Big(\frac{e}{2}\Big), & f_c\ \text{Gaussian},\\[1.0ex]
\displaystyle n\log_2\!\Big(\frac{e}{2}\Big), & f_c\ \text{factorized Laplacian},\\[1.0ex]
0, & f_c\ \text{uniform},
\end{cases}
\end{equation}
and their mixture average
\[
\bar{\Delta}\triangleq \sum_{c=1}^K \pi_c \Delta_c.
\]
The proposed approximation for $h(X)$
\begin{align}
   {\hat h}(X)&\triangleq h_{\sf L}(X)+\bar{\Delta},\nonumber\\
   &=-\sum_{c=1}^K \pi_c\, \log_2\! \left(\sum_{d=1}^K \pi_d\,z_{c,d} \right)+\sum_{c=1}^K \pi_c \Delta_c\nonumber\\
   &=-\sum_{c=1}^K \pi_c\, \left[\log_2\! \left(\sum_{d=1}^K \pi_d\,z_{c,d} \right)+  \Delta_c\right].
\end{align}
and the clipped estimate 
\begin{equation}
\label{eq:h_clip}
 {\hat h}_{\sf cl}(X)\triangleq
\min\{ \max\{{\hat h}(X), h(X\mid C)\}, h(X\mid C)+H(C)\}.
\end{equation}
\end{theorem}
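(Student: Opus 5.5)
The statement is essentially a definition, so the proof I would give verifies four things: the displayed expression is consistent, $\hat h_{\sf cl}(X)$ always satisfies the bracket of Theorem~\ref{thm:label_sandwich}, the construction is exact in the two overlap limits, and clipping never increases the error.

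\textbf{Step 1: the displayed identities.} The second line follows from the first by substituting Lemma~\ref{lem:jensen_lower} for $h_{\sf L}(X)$ and the definition of $\bar\Delta$. For the third line I would write $\bar\Delta=\sum_c\pi_c\Delta_c$ and use $\sum_c\pi_c=1$ to collect everything under one weighted sum. The correct collected form is
\[
-\sum_{c=1}^K\pi_c\Big[\logtwo\Big(\sum_{d=1}^K\pi_d z_{c,d}\Big)-\Delta_c\Big].
\]
The printed version has a plus sign in front of $\Delta_c$, which I take to be a sign typo that the proof should fix. The offsets $\Delta_c$ are exactly the family constants already computed in item~3 of Lemma~\ref{lem:offset_comp}, so nothing new is needed there.

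\textbf{Step 2: admissibility of the clipped estimate.} Write $L=h(X\mid C)$ and $U=L+H(C)$, so that $L\le U$ because $H(C)\ge 0$. The map $t\mapsto\min\{\max\{t,L\},U\}$ is the Euclidean projection of $t$ onto the interval $[L,U]$. Hence $\hat h_{\sf cl}(X)\in[L,U]$ regardless of the value of $\hat h(X)$.

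Theorem~\ref{thm:label_sandwich} gives $h(X)\in[L,U]$. Projection onto a convex set is $1$-Lipschitz and fixes points of that set, so
\[
|\hat h_{\sf cl}(X)-h(X)|\le|\hat h(X)-h(X)|.
\]
Thus clipping can only reduce the error.

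\textbf{Step 3: exactness in the limits.}
\begin{itemize}
\item \emph{Complete overlap} ($f_c\equiv f$ for all $c$). Item~1 of Lemma~\ref{lem:offset_comp} gives $\hat h(X)=h(f)=L$, which already lies in $[L,U]$, so $\hat h_{\sf cl}(X)=h(X)$.
\item \emph{Zero cross-overlap} ($z_{c,d}=0$ for $c\ne d$). Item~2 of Lemma~\ref{lem:offset_comp} gives $\hat h(X)=U$, so again clipping does nothing.
\end{itemize}
In the zero-overlap case, $f_cf_d=0$ a.e.\ for $c\ne d$, so the supports are essentially disjoint. The rule $g(x)=\arg\max_c f_c(x)$ then recovers $C$ almost surely, and item~2 of Proposition~\ref{prop:tightness_label_sandwich} gives $h(X)=U$. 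The estimate is therefore exact in this case too.

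\textbf{Step 4: near separation (main obstacle).} The hard part is to show that when all cross-overlaps are small, both $\hat h_{\sf cl}(X)$ and $h(X)$ are close to $U$.

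\emph{The estimate.} $\hat h$ depends continuously on the $z_{c,d}$ with $c\ne d$, as long as the $z_{c,c}$ stay fixed and positive. So $\hat h\to U$ as the cross-overlaps tend to $0$.

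\emph{The true entropy.} Here the task is to show $H(C\mid X)\to0$. I would first try the pointwise inequality
\[
-\sum_c q_c\logtwo q_c\ \le\ \sum_{c\ne d}\sqrt{q_cq_d}
\]
(up to a constant factor) for a pmf $q$, extending the binary bound $h_b(p)\le 2\sqrt{p(1-p)}$. Apply it with $q_c=\pi_cf_c(x)/p_X(x)$ and integrate against $p_X$. This yields
\[
H(C\mid X)\lesssim\sum_{c\ne d}\sqrt{\pi_c\pi_d}\int\sqrt{f_cf_d}\,dx,
\]
a Bhattacharyya-coefficient bound.

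The subtle point is that a small $z_{c,d}=\int f_cf_d$ does not by itself force a small Bhattacharyya coefficient $\int\sqrt{f_cf_d}$. So the limit should be stated along separation sequences for which the Bhattacharyya coefficients vanish, for example growing mean separation at fixed scales in the Gaussian and Laplacian families, where both quantities decay together. Along such sequences $h(X)\to U$ and $\hat h_{\sf cl}(X)\to U$, so $\hat h_{\sf cl}(X)-h(X)\to0$.
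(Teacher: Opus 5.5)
Your proposal is correct and uses the same skeleton as the paper's proof. Because the statement is essentially definitional, both arguments rest on items~1 and~2 of Lemma~\ref{lem:offset_comp} for the two endpoints, and on clipping for admissibility. The paper only sketches these points, and your version adds real content in several places:
\begin{itemize}
\item You correctly spot the sign error in the last displayed line. It should read $-\sum_c\pi_c[\log_2(\sum_d\pi_d z_{c,d})-\Delta_c]$, and the identity $\sum_c\pi_c=1$ is not even needed to collect terms.
\item Viewing clipping as projection onto $[h(X\mid C),\,h(X\mid C)+H(C)]$ gives admissibility, which is all the paper states. It also shows that clipping can never increase the error.
\item The paper justifies the separated endpoint by noting that $h(X)\to h(X\mid C)+H(C)$ as $I(X;C)\to H(C)$. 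That is immediate from Lemma~\ref{lem:decomp}, but it never links vanishing overlaps $z_{c,d}$ to $I(X;C)\to H(C)$. You supply that link. When $z_{c,d}=0$ it is exact, via the decoder $g=\arg\max_c f_c$ and item~2 of Proposition~\ref{prop:tightness_label_sandwich}. Otherwise it is asymptotic, via a Bhattacharyya bound.
\end{itemize}

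The constant you left open is $1/\ln 2$. Shannon entropy is dominated by R\'enyi-$\tfrac12$ entropy, so
\[
-\sum_c q_c\log_2 q_c\;\le\;2\log_2\sum_c\sqrt{q_c}\;=\;\log_2\Bigl(1+\sum_{c\ne d}\sqrt{q_cq_d}\Bigr)\;\le\;\frac{1}{\ln 2}\sum_{c\ne d}\sqrt{q_cq_d}.
\]
Integrating against $p_X$ then gives $H(C\mid X)\le\frac{1}{\ln 2}\sum_{c\ne d}\sqrt{\pi_c\pi_d}\int\sqrt{f_cf_d}\,dx$.

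Your caveat is also right: a small $z_{c,d}$ does not by itself force a small Bhattacharyya coefficient. This is exactly the step the paper's one-line argument skips. The correct formulation is to state asymptotic exactness along sequences with vanishing Bhattacharyya coefficients, such as growing mean separation at fixed scales, as in the paper's sweeps. The paper's sketch is shorter; yours actually proves the endpoint behaviour it advertises and states the hypothesis under which it holds.
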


\begin{proof}
The proof follows directly from the preceding lemmas.
Lemma~\ref{lem:jensen_lower} establishes that $h_{\sf L}(X)$ is a valid lower bound on $h(X)$.
Lemma~\ref{lem:offset_comp} characterizes the constant gap between the Jensen functional
and the Shannon entropy in the complete-overlap regime, enabling the offset
$\bar{\Delta}$ to remove this bias.
Exactness in the complete-overlap case follows immediately from this calibration,
while exactness in the vanishing-overlap regime follows from the fact that
$h(X)\to h(X\mid C)+H(C)$ as $I(X;C)\to H(C)$.
Finally, clipping enforces the universal bounds of
Theorem~\ref{thm:label_sandwich}, guaranteeing consistency for all mixtures.
\end{proof}

 \subsection{Implications of Theorem~\ref{thm:tight_approx}}
\label{subsec:imp_thm2}

The starting point is the decomposition $h(X)=h(X\mid C)+I(X;C)$. For most mixture models,
$h(X\mid C)$ is easy—it's just the weighted average of component entropies—while the hard part is the
 missing term   $I(X;C)$, which is entirely about how much the components overlap.
Theorem~\ref{thm:tight_approx} says we can replace this implicit, posterior-based quantity by a
\emph{deterministic} proxy computable from pairwise overlaps: in effect,
\begin{align}
   \hat h(X)=h(X\mid C)+\widehat I(X;C),
\end{align}
where $\widehat I(X;C)\triangleq \hat h(X)-h(X\mid C)$ is driven by the overlap geometry through the Jensen functional.
When the cross-overlaps $z_{c,d}$ are large, the mixture behaves like a single cloud and the proxy
predicts little information about the label; when cross-overlaps are small, the proxy predicts that
the label is nearly decodable. In this sense, the overlap matrix plays the role of a \textit{soft confusion matrix} for the components, and $\hat h(X)$ is the corresponding entropy
bookkeeping in closed form.

Just as importantly, Theorem~\ref{thm:tight_approx} explains why the construction is stable.
The offset $\bar\Delta$ is not a free parameter: it is the intrinsic gap between Shannon entropy and
R\'enyi-$2$ entropy at the \emph{component} level, $\Delta(f)=h(f)-h_2(f)$, averaged across components.
This calibrates the overlap-based proxy so that it lands correctly at the two information-theoretic
endpoints: complete overlap ($I(X;C)=0$ so $h(X)=h(X\mid C)$) and vanishing overlap
($I(X;C)\approx H(C)$ so $h(X)\approx h(X\mid C)+H(C)$). The final clipping step then enforces the
universal label-sandwich bounds for every mixture, regardless of dimension or parameter choice:
\[
h(X\mid C)\ \le\ \hat h(X)\ \le\ h(X\mid C)+H(C),
\]
so the approximation cannot \textit{cheat} basic information identities in the intermediate regime. Operationally, the theorem reduces the problem of approximating $h(X)$ to one question: can we compute overlaps $z_{c,d}$ (exactly or efficiently)? If yes, then we immediately get a
closed-form surrogate for both $h(X)$ and the mutual information $I(X;C)$ via $\widehat I(X;C)$.

\section{Examples for Time-Honored Mixture Families}
\label{sec:examples}

This section instantiates Theorem~\ref{thm:label_sandwich} and Theorem~\ref{thm:tight_approx} for several time-honored mixture families that frequently appear across information theory, signal processing, and statistical learning. For each family we provide (i) the closed-form conditional entropy $h(X\mid C)$, (ii) the overlap integrals $z_{c,d}=\int f_c(x) f_d(x){\rm d}x$, and (iii) the corresponding offset $\bar\Delta$. Once these ingredients are available, the Jensen proxy $h_{\sf L}(X)$ in \eqref{eq:hJ_def} and the clipped approximation $\hat h(X)$ in \eqref{eq:h_clip} follow directly.

In applications, mixture entropy is rarely needed in isolation---it typically enters a larger objective: capacity/achievable-rate expressions for discrete-input continuous-output channels, regularization terms in variational learning, or diagnostics for cluster separability. The overlap matrix $\{z_{c,d}\}$ is especially useful because it acts as a \emph{soft confusion matrix}: small cross-overlaps indicate that the label $C$ is (nearly) decodable from $X$ so $h(X)\approx h(X\mid C)+H(C)$, whereas large cross-overlaps indicate ambiguity so $h(X)\approx h(X\mid C)$.

\begin{table*}[t]
\centering
\small
\renewcommand{\arraystretch}{1.20}
\setlength{\tabcolsep}{6.2pt}
\begin{tabular}{p{0.17\textwidth} p{0.30\textwidth} p{0.2\textwidth} p{0.2\textwidth}}
\toprule
\textbf{Family} & $\boldsymbol{h(X\mid C)}$ & \textbf{Overlaps $\boldsymbol{z_{c,d}}$} & $\boldsymbol{\bar\Delta}$ \\
\midrule
Gaussian
& $\sum_{c}\pi_c\frac12\logtwo\!\big((2\pi e)^n|\Sigma_c|\big)$
& $z_{c,d}^{\sf GM}=\varphi(\mu_c;\mu_d,\Sigma_c+\Sigma_d)$
& $\frac{n}{2}\logtwo(e/2)$ \\[0.6ex]

Factorized Laplacian
& $\sum_c \pi_c\sum_{i=1}^n \logtwo(2e\,b_{c,i})$
& $z_{c,d}^{\sf LM}=\prod_i z^{\mathrm{LL}}_{c,d}(i)$ (closed form)
& $n\logtwo(e/2)$ \\[0.6ex]

Uniform on sets
& $\sum_c \pi_c\logtwo|\mathcal{A}_c|$
& $z_{c,d}^{\sf UM}=\frac{|\mathcal{A}_c\cap \mathcal{A}_d|}{|\mathcal{A}_c||\mathcal{A}_d|}$
& $0$ \\[0.6ex]

Gaussian--Uniform
& mixture-weighted sum of Gaussian and uniform entropies
& $z_{c,d}^{\sf GUM}=\frac{1}{|\mathcal{A}_c|}\,\mathbb{P}(Z_d\in\mathcal{A}_c)$
& $\sum_{c\in\mathcal{G}}\pi_c\frac{n}{2}\logtwo(e/2)$ \\[0.6ex]

Gaussian--Laplacian (diag.\ $\Sigma$)
& mixture-weighted sum of Gaussian and Laplace entropies
& $z_{c,d}^{\sf GLM}$ in \eqref{eq:gauss_lap_overlap_1d_ex}.
& $\sum_{c\in\mathcal{G}}\pi_c\frac{n}{2}\logtwo(e/2)+\sum_{c\in\mathcal{L}}\pi_c\,n\logtwo(e/2)$ \\[0.6ex]

Laplacian--Uniform
& mixture-weighted sum of Laplace and uniform entropies
& $z_{c,d}^{\sf LUM}=\frac{1}{|\mathcal{A}_d|}\mathbb{P}(Y_c\in\mathcal{A}_d).$
& $\sum_{c\in\mathcal{L}}\pi_c\,n\logtwo(e/2)$ \\
\bottomrule
\end{tabular}
\vspace{0.1cm}
\caption{Closed-form ingredients for evaluating $h_{\sf L}(X)$ and the approximation $\hat h(X)$.}
\label{tab:mixture_families_summary}
\end{table*}

\subsection{Example 1: Gaussian mixture}
\label{subsec:ex_gaussian}
 
Gaussian mixtures are the workhorse model in model-based clustering, discriminant analysis, and latent-variable
learning---typically fit by EM-type procedures
\cite{BanfieldRaftery1993ModelBased,McLachlanPeel2000Mixtures,FraleyRaftery2002ModelBasedClustering,DempsterLairdRubin1977EM}.
From an information-theoretic viewpoint, Gaussian mixtures also arise as channel output densities when a discrete
constellation is transmitted over continuous-output channels, a classical setting behind finite-input capacity
bounds and approximations \cite{Smith1971AmplitudeConstraint,OzarowWyner1990FiniteInput,namyoon,sumin}.
Because of these roles, computable entropy surrogates for Gaussian mixtures have been studied extensively
(e.g., Jensen-type bounds in \cite{HuberBaileyDurrantWhyteHanebeck2008} and tight bounds in \cite{MoshksarKhandani2016EntropyGM}).

Let $f_c(x)=\varphi(x;\mu_c,\Sigma_c)$ with $\Sigma_c\succ 0$, where
\begin{equation}
\label{eq:gauss_pdf}
\varphi(x;\mu,\Sigma)
\triangleq
\frac{1}{(2\pi)^{n/2}|\Sigma|^{1/2}}
\exp\!\Big(-\tfrac12(x-\mu)^\top\Sigma^{-1}(x-\mu)\Big).
\end{equation}
Then
\begin{equation}
\label{eq:hXgC_gauss_ex}
h(X\mid C)=\sum_{c=1}^K \pi_c\,\frac12\logtwo\!\big((2\pi e)^n|\Sigma_c|\big).
\end{equation}
The overlaps admit the closed form
\begin{align}
\label{eq:zcd_gg_ex}
z_{c,d}
&=\int_{\R^n} \varphi(x;\mu_c,\Sigma_c)\varphi(x;\mu_d,\Sigma_d)\,dx \nonumber\\
&=\varphi(\mu_c;\mu_d,\Sigma_c+\Sigma_d) \nonumber\\
&=\frac{\exp\!\Big(-\frac12(\mu_c-\mu_d)^\top(\Sigma_c+\Sigma_d)^{-1}(\mu_c-\mu_d)\Big)}
{(2\pi)^{n/2}\,|\Sigma_c+\Sigma_d|^{1/2}}.
\end{align}
Substituting into \eqref{eq:hJ_def} gives
\begin{align}
\label{eq:hJ_gauss_explicit}
h_{\sf L}^{\sf GM}(X)
=
-\sum_{c=1}^K \pi_c\logtwo\!\Big(\sum_{d=1}^K \pi_d\,\varphi(\mu_c;\mu_d,\Sigma_c+\Sigma_d)\Big).
\end{align}
For a pure Gaussian mixture, $\bar\Delta=\frac{n}{2}\logtwo(e/2)$, and $\hat h(X)$ follows from \eqref{eq:h_clip}.

 The overlap $z_{c,d}$ is a Gaussian density evaluated at $\mu_c$ under mean $\mu_d$ and covariance $\Sigma_c+\Sigma_d$. It decays exponentially in Mahalanobis separation, so increasing mean separation increases the effective decodability of the label (larger $I(X;C)$), pushing $h(X)$ from $h(X\mid C)$ toward $h(X\mid C)+H(C)$. In clustering terms, $z_{c,d}$ quantifies \emph{confusability} between clusters; in discrete-input channels, it quantifies how likely two inputs induce overlapping output clouds, which directly impacts mutual information and hence rate.

\subsection{Example 2: Factorized Laplacian mixture}
\label{subsec:ex_laplace}

 Laplacian components (and mixtures thereof) arise in heavy-tailed modeling, robust estimation, and sparsity-driven signal models; in statistics, mixtures of Laplace-like components provide flexible tail and skewness modeling (e.g., shifted asymmetric Laplace mixtures) \cite{FranczakBrowneMcNicholas2014SALMixtures}. In early applied mixture modeling, Laplace--normal (Gaussian--Laplacian) hybrids were used to capture impulsive departures from nominal behavior \cite{JonesMcLachlan1990LaplaceNormal}, a theme that persists in robust modeling.

 Let
\[
f_c(x)=\prod_{i=1}^n \frac{1}{2b_{c,i}}
\exp\!\Big(-\frac{|x_i-\mu_{c,i}|}{b_{c,i}}\Big),\qquad b_{c,i}>0.
\]
Then
\begin{equation}
\label{eq:hXgC_lap_ex}
h(X\mid C)=\sum_{c=1}^K \pi_c \sum_{i=1}^n \logtwo(2e\,b_{c,i}).
\end{equation}
Overlaps factorize as $z_{c,d}=\prod_{i=1}^n z^{\mathrm{LM}}_{c,d}(i)$ where, with
$\delta_i=|\mu_{c,i}-\mu_{d,i}|$, $s_1=b_{c,i}$, $s_2=b_{d,i}$,
\begin{equation}
\label{eq:lap_overlap_1d_ex}
z^{\mathrm{LM}}_{c,d}(i)
=
\begin{cases}
\displaystyle
\frac{e^{-\delta_i/s}}{4s}\Big(1+\frac{\delta_i}{s}\Big), & s_1=s_2=s,\\[1.2ex]
\displaystyle
\frac{s_1e^{-\delta_i/s_1}-s_2e^{-\delta_i/s_2}}{2(s_1^2-s_2^2)}, & s_1\neq s_2.
\end{cases}
\end{equation}
Thus
\begin{equation}
\label{eq:hJ_lap_explicit}
h_{\sf L}^{\sf LM}(X)
=
-\sum_{c=1}^K \pi_c\logtwo\!\Big(\sum_{d=1}^K \pi_d\,
\prod_{i=1}^n z^{\mathrm{LM}}_{c,d}(i)\Big).
\end{equation}
For a pure Laplacian mixture, $\bar\Delta=n\logtwo(e/2)$.

The product structure makes overlap computation scale as $O(K^2 n)$ rather than $O(K^2)$ high-dimensional integrals,
which is critical in large-$n$ robust models.
Each coordinate contributes a multiplicative penalty as $|\mu_{c,i}-\mu_{d,i}|$ grows, so even moderate per-coordinate
separation can rapidly shrink cross-overlaps in high dimension.
Compared to Gaussians (variance matched), heavier tails generally sustain overlap longer in the intermediate regime,
which explains why Laplacian-mixture entropy can approach the upper bound more gradually as separation increases.

\subsection{Example 3: Uniform mixture on measurable sets}
\label{subsec:ex_uniform_sets}

Uniform components are natural for \emph{hard support uncertainty}: bounded perturbations, feasibility sets,
quantization cells, and saturation/clipping constraints.
Uniform mixtures have been studied from the standpoint of statistical estimation and identifiability
\cite{CraigmileTitterington1997UniformMixtures}, including tractable special cases such as two-component
uniform mixtures \cite{HusseinLiu2009TwoUniforms}.
In engineering, mixtures of uniform laws also serve as coarse priors or as bounded uncertainty modes in robust inference.

Let $f_c$ be uniform on $\mathcal{A}_c\subset\R^n$ with finite volume $|\mathcal{A}_c|$:
\[
f_c(x)=\frac{1}{|\mathcal{A}_c|}\mathbf{1}\{x\in\mathcal{A}_c\}.
\]
Then
\begin{align}
\label{eq:hXgC_unif_ex}
h(X\mid C)&=\sum_{c=1}^K \pi_c \logtwo|\mathcal{A}_c|,\\
\label{eq:unif_overlap_ex}
\end{align}
and
\begin{align}
z_{c,d}&=\frac{|\mathcal{A}_c\cap \mathcal{A}_d|}{|\mathcal{A}_c|\,|\mathcal{A}_d|}.
\end{align}
Hence
\begin{equation}
\label{eq:hJ_unif_explicit}
h_{\sf L}^{\sf UM}(X)
=
-\sum_{c=1}^K \pi_c\logtwo\!\left(
\sum_{d=1}^K \pi_d\,
\frac{|\mathcal{A}_c\cap \mathcal{A}_d|}{|\mathcal{A}_c|\,|\mathcal{A}_d|}
\right).
\end{equation}
For uniform components, $\bar\Delta=0$.

The approximation becomes purely geometric: only volumes and intersection volumes appear.
This connects mixture entropy to set packing/covering intuition: when $\mathcal{A}_c$ are nearly disjoint,
cross-overlaps vanish, the label becomes decodable, and the entropy rises to $h(X\mid C)+H(C)$ quickly.
This geometry-centric view is especially useful when supports have analytic intersection formulas
(e.g., axis-aligned boxes, certain polytopes), or when intersection queries can be computed efficiently.

\subsection{Example 4: Gaussian--uniform mixture}
\label{subsec:ex_gauss_unif}

Gaussian--uniform hybrids are a classical robust clustering/outlier model:
a Gaussian mixture captures nominal clusters while a uniform component absorbs background noise or gross outliers
\cite{CorettoHennig2011GaussianUniform}.
Recent work further analyzes separation constraints and identifiability properties in such models
\cite{Coretto2022GaussianUniformNoiseSeparation}.
In communications and detection problems, this hybrid also captures Gaussian noise plus bounded interference
or inlier/outlier sensing.

Partition labels into $\mathcal{G}$ (Gaussian) and $\mathcal{U}$ (uniform on sets).
Then
\begin{align}
\label{eq:hXgC_gauss_unif_ex}
h(X\mid C)
&=
\sum_{c\in\mathcal{U}} \pi_c \logtwo|\mathcal{A}_c|
+\sum_{c\in\mathcal{G}} \pi_c\,\frac12\logtwo\!\big((2\pi e)^n|\Sigma_c|\big).
\end{align}
Gaussian-uniform overlaps are
\begin{align}
\label{eq:unif_gauss_overlap_ex}
z_{c,d}^{\sf GUM}
&=\frac{1}{|\mathcal{A}_c|}\int_{\mathcal{A}_c}\varphi(x;\mu_d,\Sigma_d)\,dx
=\frac{1}{|\mathcal{A}_c|}\,\mathbb{P}(Z_d\in\mathcal{A}_c),
\end{align}
where $Z_d\sim\mathcal{N}(\mu_d,\Sigma_d)$. The offset averages only Gaussian components:
\[
\bar\Delta=\sum_{c\in\mathcal{G}} \pi_c \frac{n}{2}\logtwo(e/2).
\]

The mixed overlap in \eqref{eq:unif_gauss_overlap_ex} is Gaussian mass per unit volume of $\mathcal{A}_c$:
it quantifies how much of a Gaussian cloud falls inside a bounded-support mode.
In robust clustering terms \cite{CorettoHennig2011GaussianUniform}, small Uniform-Gaussian overlaps mean the uniform component
captures regions rarely visited by the Gaussians (easy outlier separation), which increases $I(X;C)$ and pushes
entropy upward toward the upper sandwich bound.

\subsection{Example 5: Gaussian--Laplacian mixture}
\label{subsec:ex_gauss_lap}

Gaussian--Laplacian hybrids model dense nominal with sparse/impulsive behavior and appear in applied mixture modeling
(e.g., Laplace--normal mixtures \cite{JonesMcLachlan1990LaplaceNormal}) and in robust system identification and learning
\cite{ShenoyGorinevsky2014GaussianLaplacian}.
From an information viewpoint, these hybrids are important because \emph{shape} differences (light vs.\ heavy tails)
can make labels partially decodable even when means coincide, especially in high dimension.

Partition labels into $\mathcal{G}$ (Gaussian) and $\mathcal{L}$ (factorized Laplacian). Then
\begin{align}
\label{eq:hXgC_gauss_lap_ex}
h(X\mid C)
&=
\sum_{c\in\mathcal{G}} \pi_c\,\frac12\logtwo\!\big((2\pi e)^n|\Sigma_c|\big)\nonumber\\
&+\sum_{c\in\mathcal{L}} \pi_c \sum_{i=1}^n \logtwo(2e\,b_{c,i}).
\end{align}
To retain closed form Gaussian-Laplacian overlaps, assume diagonal Gaussian covariance
$\Sigma_c=\diag(\sigma_{c,1}^2,\dots,\sigma_{c,n}^2)$ for $c\in\mathcal{G}$.
Then Gaussian-Laplacian overlaps factorize $z_{c,d}=\prod_{i=1}^n z^{\mathrm{GL}}_{c,d}(i)$ with
\begin{align}
\label{eq:gauss_lap_overlap_1d_ex}
z^{\mathrm{GLM}}_{c,d}(i)
&\triangleq
\int_{\mathbb R}\mathcal{N}(x;\mu_{c,i},\sigma_{c,i}^2)\,
\frac{1}{2b_{d,i}}e^{-|x-\mu_{d,i}|/b_{d,i}}\,dx \nonumber\\
&=
\frac{e^{\frac{\sigma_{c,i}^2}{2b_{d,i}^2}}}{2b_{d,i}}
\left[
e^{\frac{\mu_{c,i}-\mu_{d,i}}{b_{d,i}}}
\Phi\!\left(\frac{\mu_{d,i}-\mu_{c,i}-\frac{\sigma_{c,i}^2}{b_{d,i}}}{\sigma_{c,i}}\right) +
e^{-\frac{\mu_{d,i}\mu_{c,i}}{b_{d,i}}}
\Phi\!\left(\frac{\mu_{c,i}-\mu_{d,i}-\frac{\sigma_{c,i}^2}{b_{d,i}}}{\sigma_{c,i}}\right)
\right],
\end{align}
with $\Phi(\cdot)$ the standard normal cumulative distribution function (CDF). The offset averages Gaussian and Laplacian components:
\[
\bar\Delta
=
\sum_{c\in\mathcal{G}} \pi_c \frac{n}{2}\logtwo\!\Big(\frac{e}{2}\Big)
+\sum_{c\in\mathcal{L}} \pi_c\,n\logtwo\!\Big(\frac{e}{2}\Big).
\]

The diagonal assumption reduces an $n$-dimensional Gaussian-Laplacian overlap integral to $n$ scalar terms, yielding an $O(K^2 n)$ computation that scales to high-dimensional feature vectors. Operationally, the Gaussian-Laplacian overlaps quantify the soft match between a Gaussian bump and a Laplacian spike: if heavy-tailed components intrude substantially into nominal Gaussian regions, cross-overlaps increase, labels become harder to decode, and entropy remains closer to $h(X\mid C)$. This connects directly to robust modeling practices in \cite{ShenoyGorinevsky2014GaussianLaplacian}.

\subsection{Example 6: Laplacian--uniform mixture}
\label{subsec:ex_lap_unif}

Laplacian--uniform hybrids combine heavy-tailed variability (Laplacian) with bounded-support uncertainty (uniform), capturing scenarios where one mode is sparse/impulsive while another mode represents hard constraints or bounded background uncertainty.
This is a natural extension of (i) Laplacian mixture modeling used for tail/skewness flexibility
\cite{FranczakBrowneMcNicholas2014SALMixtures} and (ii) uniform mixture modeling used for bounded-support uncertainty \cite{CraigmileTitterington1997UniformMixtures,HusseinLiu2009TwoUniforms}.

Partition labels into $\mathcal{L}$ (factorized Laplacian) and $\mathcal{U}$ (uniform on sets).
For $c\in\mathcal{L}$ and $d\in\mathcal{U}$,
\begin{align}
z_{c,d}^{\sf LUM}
=\frac{1}{|\mathcal{A}_d|}\int_{\mathcal{A}_d} f_c(x)\,dx
=\frac{1}{|\mathcal{A}_d|}\mathbb{P}(Y_c\in\mathcal{A}_d),
\end{align}
where $Y_c\sim f_c.$ For the common case of axis-aligned boxes $\mathcal{A}_d=\mu_d+[-a_d,a_d]^n$ and factorized Laplace,
$\mathbb{P}(Y_c\in\mathcal{A}_d)$ factorizes into one-dimensional CDF differences:
\begin{align}
\mathbb{P}(Y_c\in\mathcal{A}_d)
=\prod_{i=1}^n
\left[F_{\mathrm{Lap}}(\mu_{d,i}+a_{d,i};\mu_{c,i},b_{c,i})  -F_{\mathrm{Lap}}(\mu_{d,i}-a_{d,i};\mu_{c,i},b_{c,i})\right],
\end{align}
where $F_{\mathrm{Lap}}(\cdot;\mu,b)$ is the Laplace CDF.
The offset averages only Laplacian components:
\[
\bar\Delta=\sum_{c\in\mathcal{L}}\pi_c\,n\logtwo(e/2).
\]

The Laplacian-Uniform overlap is the Laplacian probability mass captured by a bounded region, normalized by volume. It therefore measures how frequently impulsive/heavy-tailed realizations fall inside a hard feasibility set. When Laplacian-Uniform overlaps are small, the bounded-support components occupy regions rarely generated by Laplacian components, which increases label identifiability and pushes entropy upward. When Laplacian-Uniform overlaps are large, bounded regions act like typical sets' for Laplacian components, reducing identifiability
and keeping entropy near $h(X\mid C)$.


\section{Numerical Results}
\label{sec:numerics}

This section connects the theory in Theorem~\ref{thm:label_sandwich} and Theorem~\ref{thm:tight_approx} to
numerical evidence. Our simulations therefore sweep a single knob (mean separation) that continuously changes the overlap geometry,
and we track how $h(X)$ moves between the two universal endpoints
\begin{align}
h(X\mid C)\ \le\ h(X)\ \le\ h(X\mid C)+H(C),
\label{eq:num_sandwich}
\end{align}
while comparing to the proposed clipped approximation $\hat h(X)$.

\subsection{Simulation Setup}
\label{subsec:num_model}

Unless otherwise stated, we use equal weights
\begin{align}
p_X(x)=\frac{1}{K}\sum_{c=1}^K f_c(x), 
\label{eq:num_equal_weight}
\end{align}
with $H(C)=\logtwo K$ so the vertical separation between the label-sandwich bounds is fixed and interpretable as exactly $\logtwo K$ bits of label uncertainty.

To isolate the effect of overlap (not a changing geometry of directions), we draw $K$ random directions
$\{u_c\}_{c=1}^K$ once, normalize each to $\|u_c\|_2=1$, and set
\begin{align}
\mu_c(S)=S\,u_c,\qquad c=1,\dots,K,
\label{eq:num_means}
\end{align}
where $S\ge 0$ is the mean-separation parameter.
Thus:
\begin{itemize}[leftmargin=*, itemsep=2pt]
\item $S=0$ enforces \emph{maximal mean overlap} ($\mu_1=\cdots=\mu_K$), so any label information must come only
      from differences in \emph{shape} (e.g., Gaussian vs.\ Laplacian vs.\ uniform support).
\item Increasing $S$ reduces cross-overlaps $z_{c,d}$ and increases label decodability, so $I(X;C)$ increases and
      $h(X)$ rises from the lower bound toward the upper bound.
\end{itemize}

We sweep $S$ over a uniform grid and for each $S$ draw $N_s$ independent and identically distributed (i.i.d.) samples. The grid endpoints are chosen so that the curves visibly approach a plateau at high separation (near the upper bound). All reported Monte Carlo curves use the same fixed random directions $\{u_c\}$ across the sweep to make the dependence on $S$ comparable and monotone.

\subsection{Variance-Matched Component Families}
\label{subsec:num_families}

\begin{figure}[t]
\centering
\includegraphics[width=0.7\linewidth]{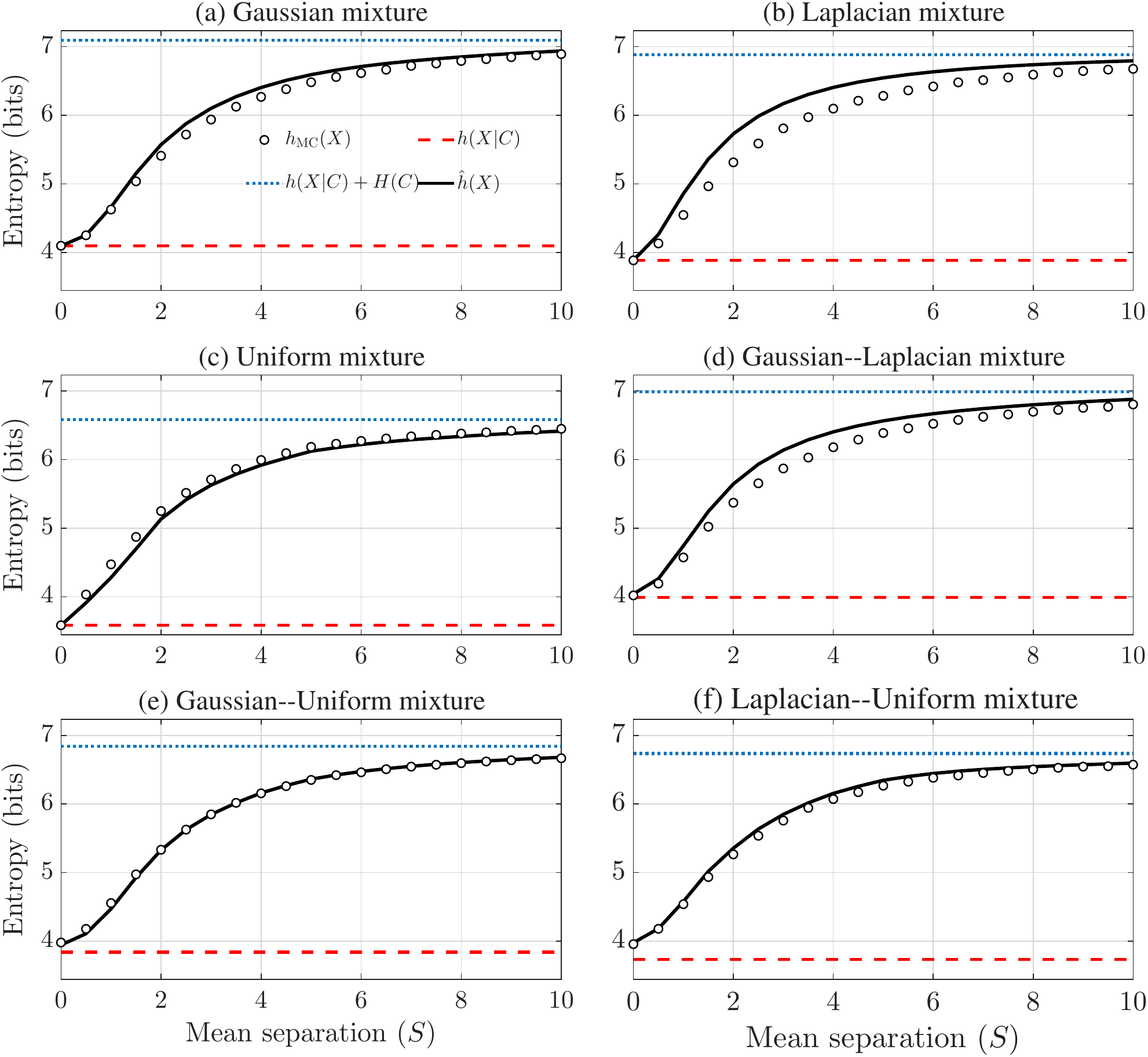}
\caption{Comparison of $h(X)$ for various mixture distributions when $(n,K)=(2,8)$: MC entropy, sandwich bounds, and clipped approximation versus mean separation.}
\label{fig:rep_case1_n2K8}
\end{figure}

In this section we test six mixture compositions: (a) Gaussian, (b) Laplacian, (c) Uniform, (d) Gaussian--Laplacian,
(e) Gaussian--Uniform, and (f) Laplacian--Uniform. All families are scaled to have the same per-dimension variance, to isolate the effect of \emph{shape}
(tails vs.\ compact support).  Specifically:
\begin{itemize}
    \item \text{Gaussian mixture:} \[p_X^{\sf GM}(x)=\frac{1}{K}\sum_{i=1}^K  \mathcal{N}(\mu_{i},\sigma^2)\] with $ \sigma^2=1$;
    \item \text{Laplacian mixture:} \[p_X^{\sf LM}(x)=\frac{1}{K}\sum_{i=1}^K  \mathrm{Lap}(\mu_{i},b)\] with $2b^2=\sigma^2 \Rightarrow b=\tfrac{1}{\sqrt{2}}$;
    \item  \text{Uniform mixture:} 
    \[p_X^{\sf UM}(x)=\frac{1}{K}\sum_{i=1}^K \mathrm{Unif}[\mu_{i}-a,\mu_{i}+a] \] with $\frac{a^2}{3}=\sigma^2 \Rightarrow a=\sqrt{3};$
     
    \item Gaussian-Laplace mixture: 
     \[p_X^{\sf GLM}(x)=\frac{2}{K}\sum_{i=1}^{K/2}  \mathcal{N}(\mu_{i},1) + \frac{2}{K}\sum_{i=1}^{K/2} \mathrm{Lap}\left(\mu_{i},\frac{1}{\sqrt{2}}\right); \]

        \item Gaussian-Uniform mixture: 
          \begin{align}
          p_X^{\sf GUM}(x)&=\frac{2}{K}\sum_{i=1}^{K/2}  \mathcal{N}(\mu_{i},1)  + \frac{2}{K}\sum_{i=1}^{K/2} \mathrm{Unif}[\mu_{i}-\sqrt{3},\mu_{i}+\sqrt{3}];
          \end{align}

     \item Laplacian-Uniform mixture: 
     \begin{align}
         p_X^{\sf LUM}(x)&=\frac{2}{K}\sum_{i=1}^{K/2}\mathrm{Lap}\left(\mu_{i},\frac{1}{\sqrt{2}}\right)  + \frac{2}{K}\sum_{i=1}^{K/2} \mathrm{Unif}[\mu_{i}-\sqrt{3},\mu_{i}+\sqrt{3}].\nonumber
     \end{align}
\end{itemize}
These hybrids are important because they expose a phenomenon invisible in same-family mixtures:
even at $S=0$ (identical means), labels can remain partially decodable due to \emph{shape mismatch}
(e.g., heavy tails vs.\ light tails, or bounded support vs.\ unbounded support).

\subsection{Benchmarks}
\label{subsec:num_metrics}

For each separation $S$, we compute four curves:
\begin{itemize}

\item Monte Carlo (MC) reference estimate: We draw $N_s$ i.i.d.\ samples $\{X^{(j)}\}_{j=1}^{N_s}\sim p_X$ and compute
\begin{align}
h_{\mathrm{MC}}(X)
\triangleq
-\frac{1}{N_s}\sum_{j=1}^{N_s}\logtwo p_X\!\big(X^{(j)}\big),
\label{eq:num_mc}
\end{align}
where $p_X(X^{(j)})=\sum_{c=1}^K \pi_c f_c(X^{(j)})$ is computed by a log-sum-exp of component log-densities.

    \item Label-sandwich bounds (Theorem~\ref{thm:label_sandwich}):  The lower bound is the closed-form conditional entropy
\begin{align}
h(X\mid C)=\sum_{c=1}^K \pi_c h(f_c),
\label{eq:num_hxgc}
\end{align}
which is constant in $S$ in our sweep because only the means change.
The upper bound is
\begin{align}
h(X\mid C)+H(C)=\sum_{c=1}^K \pi_c h(f_c)+\logtwo K.
\label{eq:num_upper}
\end{align}
These two horizontal reference lines define the only information-theoretically admissible interval for $h(X)$.

\item  Proposed approximation (Theorem~\ref{thm:tight_approx}): We compute the approximation in closed form as
\begin{align}
  {\hat h}_{\sf cl}(X) =
\min\{ \max\{{\hat h}(X), h(X\mid C)\}, h(X\mid C)+H(C)\}.
\label{eq:num_hhat}
\end{align}
All overlaps $z_{c,d}$ in computing $h(X\mid C)$ are evaluated in closed form for the considered families (Section~\ref{sec:examples}).
In computation, we evaluate $\log z_{c,d}$ and use log-sum-exp for numerical stability.
\end{itemize}


\begin{figure}[t]
\centering
\includegraphics[width=0.7\linewidth]{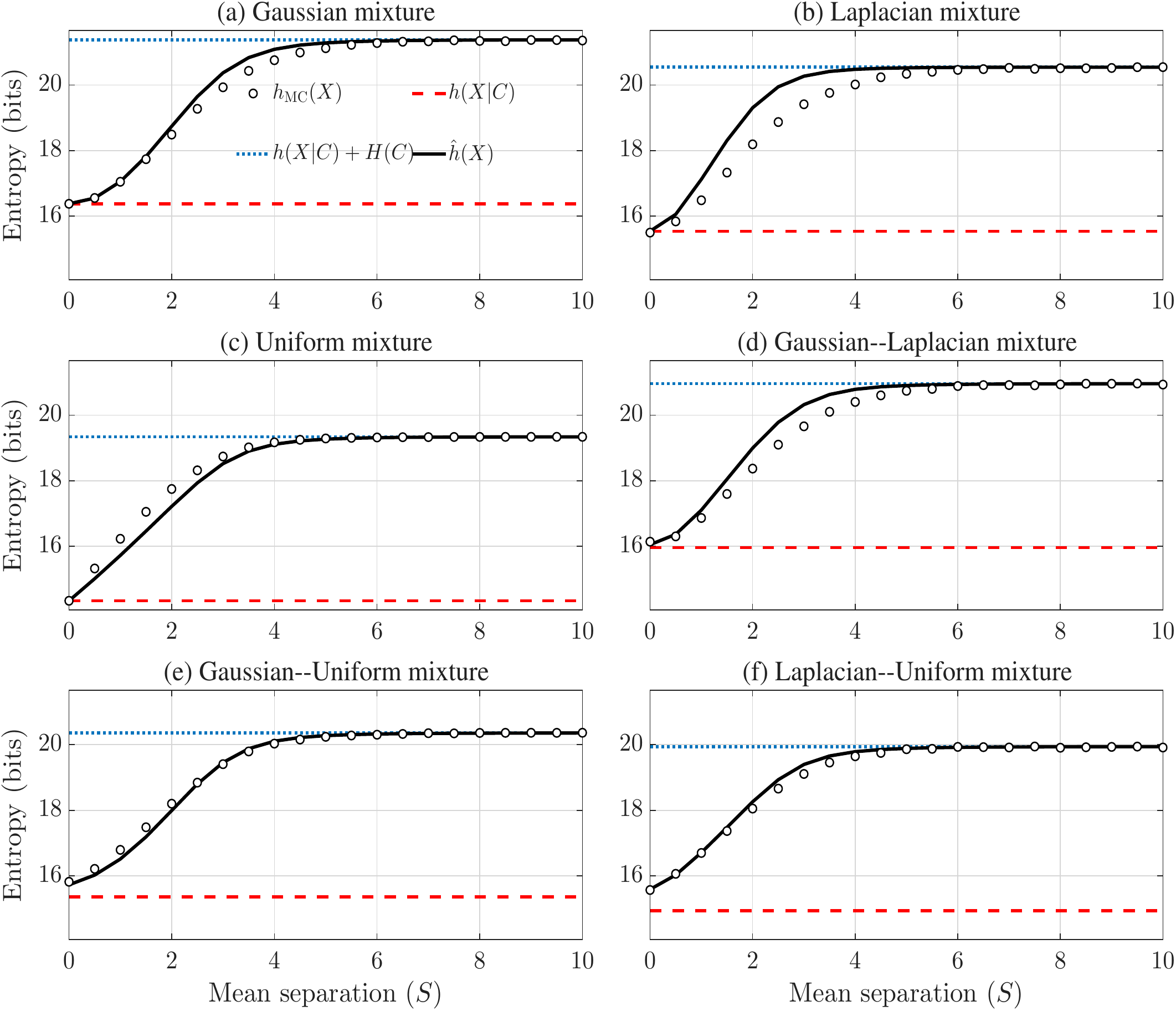}
\caption{Comparison of $h(X)$ for various mixture distributions when $(n,K)=(8,8)$: MC entropy, sandwich bounds, and clipped approximation versus mean separation.}
\label{fig:rep_case1_n8K8}
\end{figure}

\begin{figure}[t]
\centering
\includegraphics[width=0.7\linewidth]{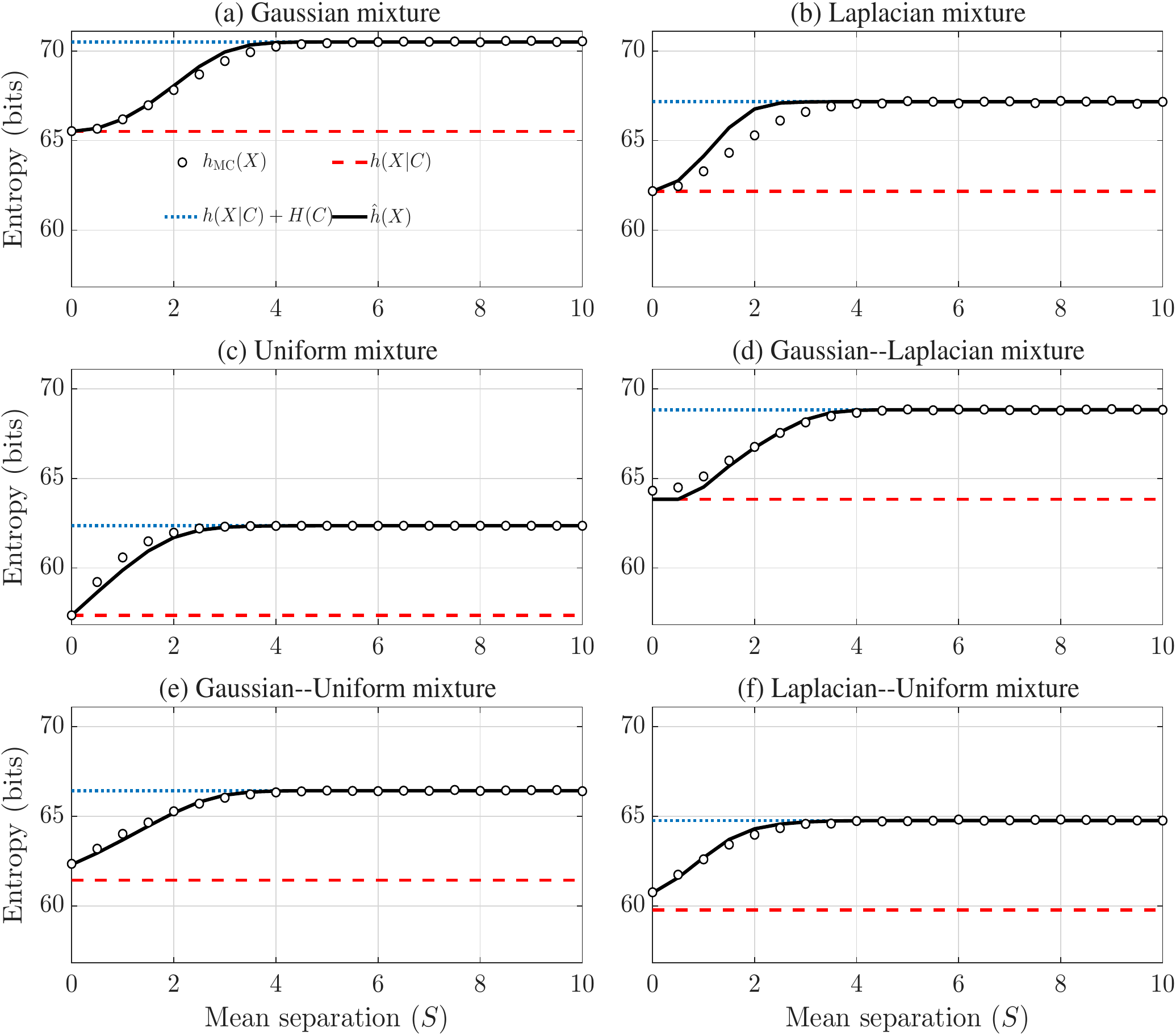}
\caption{Comparison of $h(X)$ for various mixture distributions when $(n,K)=(32,32)$: MC entropy, sandwich bounds, and clipped approximation versus mean separation.}
\label{fig:rep_case1_n32K32}
\end{figure}

\subsection{Six Mixture Families}
\label{subsec:num_interp_indep}



{\bf Effect of $K$:}
The vertical gap between the two bounds is fixed at $H(C)=\logtwo K$ under equal weights.
Thus increasing $K$ increases the maximum possible entropy lift caused by mixing.
This is why, when $K$ is larger (e.g., Fig.~\ref{fig:rep_case1_n32K32} with $K=32$),
the admissible interval is visibly taller than in cases with smaller $K$.
Operationally, $\logtwo K$ is the number of bits needed to describe the label if it were fully decodable.

{\bf Effect of $n$:}
A consistent trend across Fig.~\ref{fig:rep_case1_n2K8}--Fig.~\ref{fig:rep_case1_n32K32} is that the transition
from the lower plateau to the upper plateau occurs over a narrower range of $S$ as $n$ grows.
This is an overlap-multiplication effect:
for product-form families (Laplacian and uniform boxes) the overlap factors across coordinates, and even for
Gaussians the exponent involves an $n$-dimensional quadratic form.
As a result, moderate per-coordinate separations compound in high dimension, making the label effectively decodable
at smaller $S$.

{\bf Effect of shape:} Even though all families are variance matched, their overlap geometry differs:
\begin{itemize} 
\item Uniform mixtures (compact support)often saturate earlier in $S$ because once boxes become largely disjoint,
cross-overlaps collapse rapidly.
\item Gaussian mixtures (light tails) exhibit a smooth, exponential-in-distance overlap decay, producing a gradual
      approach to the upper bound.
\item Laplacian mixtures (heavier tails) tend to sustain overlap longer at moderate separations, so the rise toward
      the upper bound can be more gradual than in the Gaussian case (variance matched).
\end{itemize}
These differences are precisely what the overlap-based proxy is designed to capture: the only nontrivial input to
$h_{\sf L}(X)$ is the overlap matrix $\{z_{c,d}\}$, which is sensitive to tail behavior and support geometry.

{\bf Shape decodability in hybrids even at $S=0$:}
The hybrid panels (G--L, G--U, L--U) illustrate a subtle but important point: mean coincidence ($S=0$) does not necessarily imply $I(X;C)=0$. When component families differ, the likelihood shapes differ, so a single observation can still be informative about which component generated it. This shape-driven decodability becomes more visible as $n$ increases because repeated coordinates provide more evidence (concentration of log-likelihood ratios). Consequently, hybrid mixtures can exhibit a nontrivial entropy lift above $h(X\mid C)$ already at $S=0$ as shown in Fig. \ref{fig:rep_case1_n32K32}-(d), i.e, the G-L mixture case.

{\bf Accuracy and robustness of $\hat h(X)$:}
Across all panels, the proposed approximation $\hat h(X)$ tracks the Monte Carlo estimate well at both endpoints:
\begin{itemize}[leftmargin=*, itemsep=2pt]
\item \textbf{Small separation:} when overlap is large, $I(X;C)$ is small, and the offset calibration makes
      $h_{\sf L}(X)+\bar\Delta$ agree with the correct normalization.
\item \textbf{Large separation:} when overlap is small, $I(X;C)\approx H(C)$, so $h(X)\approx h(X\mid C)+H(C)$ and
      $\hat h(X)$ approaches the upper plateau.
\end{itemize}
In intermediate regimes, clipping is the safety mechanism: it enforces \eqref{eq:num_sandwich} exactly and prevents
numerically or structurally induced overshoot/undershoot that would contradict basic information identities.
\begin{figure}[t]
\centering
\includegraphics[width=0.7\linewidth]{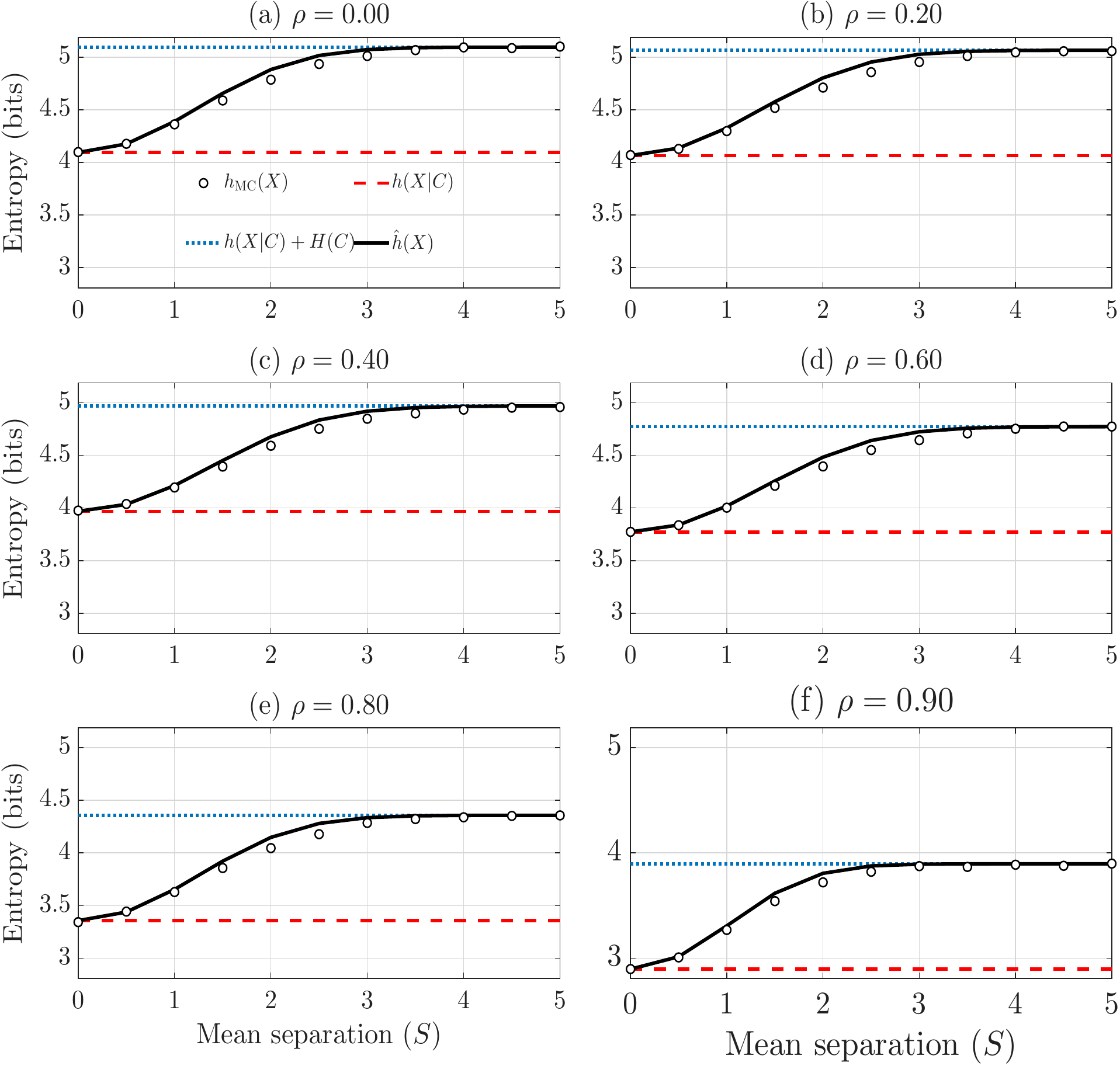}
\caption{Correlated Gaussian mixtures when $(n,K)=(2,2)$ and correlation $\rho\in\{0,0.2,0.4,0.6,0.8,0.9\}$: MC entropy, sandwich bounds, and clipped approximation versus mean separation.}
\label{fig:correlated1}
\end{figure}

\begin{figure}[t]
\centering
\includegraphics[width=0.7\linewidth]{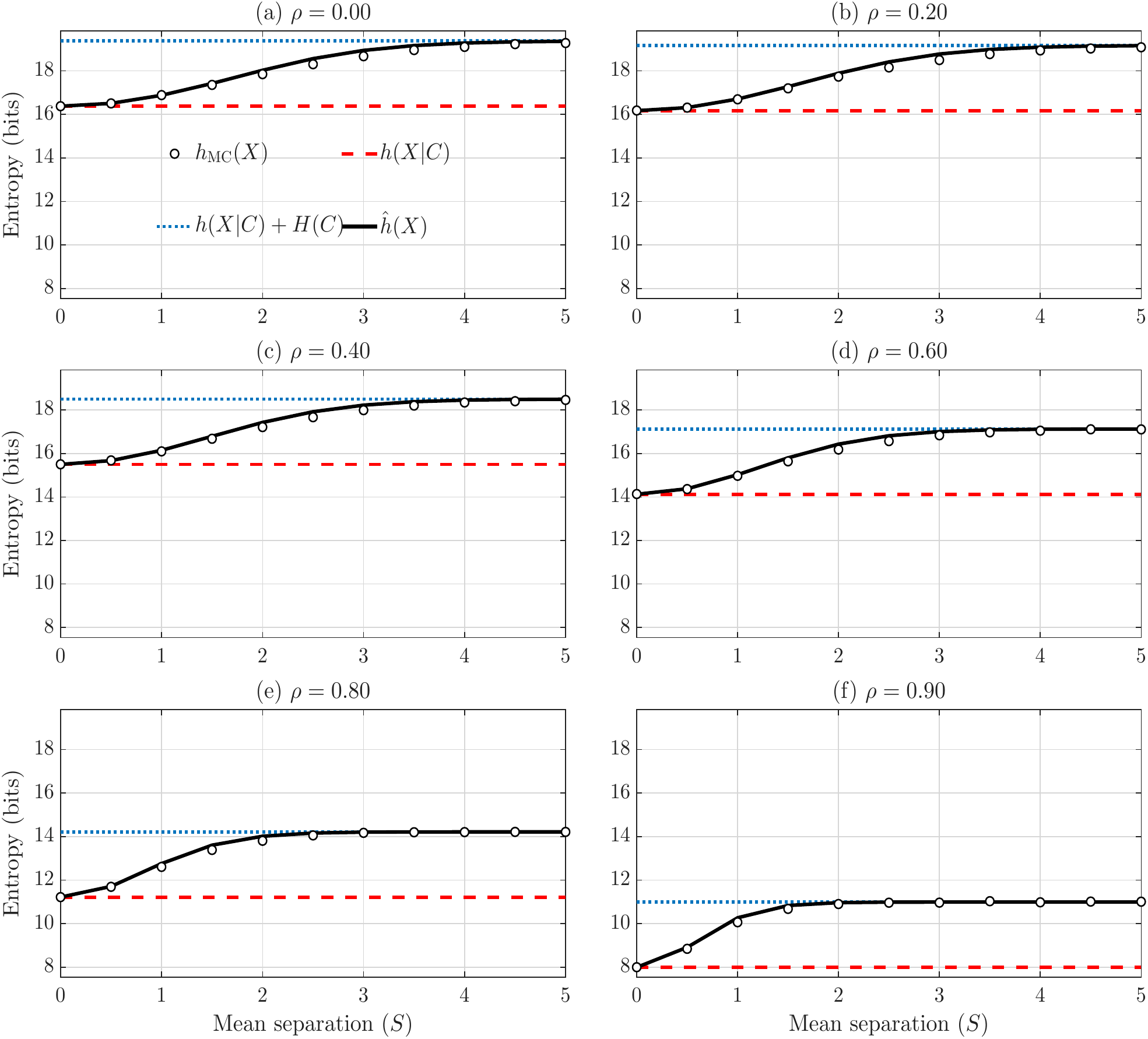}
\caption{Correlated Gaussian mixtures when $(n,K)=(8,8)$ and correlation $\rho\in\{0,0.2,0.4,0.6,0.8,0.9\}$: MC entropy, sandwich bounds, and clipped approximation versus mean separation.}
\label{fig:correlated2}
\end{figure}
\subsection{Correlated Gaussian mixtures}
\label{subsec:num_interp_corr}

We additionally test correlated Gaussian mixtures,
\begin{align}
p_X(x)
=\frac{1}{K}\sum_{c=1}^K \mathcal{N}\!\big(x;\mu_c(S),\Sigma_\rho\big),
\label{eq:num_corr_model}
\end{align}
where $\Sigma_\rho$ is an first-order autoregressive type covariance with correlation parameter $\rho$ (as used to generate the figures).
Fig.~\ref{fig:correlated1} reports $(n,K)=(2,2)$ and Fig.~\ref{fig:correlated2} reports $(n,K)=(8,8)$, each for
$\rho\in\{0,0.2,0.4,0.6,0.8,0.9\}$.

{\bf Effect of correlation:}
Correlation changes the component entropy through $|\Sigma_\rho|$:
\begin{align}
h(X\mid C)=\frac12\logtwo\!\big((2\pi e)^n|\Sigma_\rho|\big),
\label{eq:num_corr_hxgc}
\end{align}
so increasing $\rho$ generally decreases $|\Sigma_\rho|$ (more dependence, less volume) and shifts both sandwich
bounds downward by the same amount. This is visible as a vertical shift of both reference lines as $\rho$ varies.

For Gaussians, overlaps depend on Mahalanobis distance under $(\Sigma_\rho+\Sigma_\rho)^{-1}=(2\Sigma_\rho)^{-1}$.
Thus, even at fixed Euclidean mean scaling $S$, the \emph{effective} separation is governed by the geometry induced by
$\Sigma_\rho^{-1}$.
As $\rho$ increases and $\Sigma_\rho$ becomes more anisotropic, separation along low-variance directions becomes more
informative, and the transition toward the upper bound can occur at smaller $S$.
This is the correlated analogue of the high-dimensional sharpening effect: correlation reshapes the space so that
some directions provide more discriminative power.

{\bf Tightness of our approximation:} 
Theorem~\ref{thm:tight_approx} only requires that overlaps and component entropies be computable.
For correlated Gaussians, both remain closed form: $h(X\mid C)$ follows from \eqref{eq:num_corr_hxgc} and
$z_{c,d}$ follows from the Gaussian overlap formula with $\Sigma_c=\Sigma_d=\Sigma_\rho$.
Therefore $\hat h(X)$ remains fully deterministic and continues to respect the sandwich constraints by construction,
while tracking the Monte Carlo estimate over the full sweep of $(S,\rho)$.

\section{Conclusion}
\label{sec:conclusion}

This paper takes a mixture model and treats it as a channel: the discrete label $C$ goes in, the continuous observation
$X$ comes out.  That viewpoint immediately turns a difficult analytic object—the entropy of a log-sum density—into a
clean accounting identity,
\[
h(X)=h(X\mid C)+I(X;C),
\]
and with it, the universal label-sandwich bracket
\[
h(X\mid C)\ \le\ h(X)\ \le\ h(X\mid C)+H(C).
\]
The message is simple: the only part of the mixture entropy that is not already in closed form is the mutual
information term.  Everything interesting is happening in $I(X;C)$, i.e., in how much the data reveal the label, or
equivalently, in the overlap geometry of the components.

The main contribution is then to replace that implicit, posterior-based information term by an explicit surrogate
built from pairwise overlaps.  Jensen’s inequality produces a computable overlap functional $h_{\sf L}(X)$, and the
collision–Shannon gap supplies a family-dependent offset $\bar\Delta$ that calibrates this surrogate at the complete-overlap
endpoint.  A final clipping step enforces the only two inequalities that can never be violated, regardless of dimension,
number of components, or parameter regime.  The result is a closed-form approximation $\hat h(X)$ that is anchored at the
two information-theoretic extremes—where the structure collapses and the answer is known—and remains admissible everywhere
in between.  Operationally, the framework reduces mixture-entropy evaluation to a single question: can we compute overlaps
$z_{c,d}$?  For the time-honored families studied here (Gaussian, factorized Laplacian, uniform on sets, and common hybrids),
the answer is yes, yielding practical formulas with $O(K^2 n)$ complexity in high dimension.

The simulations reinforce the theory in the way one would hope: as the mean-separation parameter increases, the Monte Carlo
entropy moves smoothly from $h(X\mid C)$ toward $h(X\mid C)+H(C)$, and the proposed approximation tracks that transition while
never leaving the admissible interval.  Dimension sharpens the transition because overlaps compound across coordinates; compact
support saturates earlier; heavier tails sustain overlap longer; and in hybrid mixtures, shape differences can make the label
partially decodable even when means coincide.  Correlated Gaussian experiments show that the same story survives once Euclidean
distance is replaced by the geometry induced by covariance: overlaps remain closed form, and the approximation remains stable.

More broadly, the overlap-based lens offers a useful design principle. If we are building a model, a codebook, or a clustering
representation and we can control overlaps, then we can control the information revealed about the label, hence control the
mixture entropy.  That is the bridge this paper aims to build: from closed-form overlap geometry to entropy and mutual-information
estimates that are both computable and information-theoretically consistent.

\bibliographystyle{IEEEtran}
\bibliography{ref}

\end{document}